\newtheorem{theorem}{Theorem}
\newtheorem{lemma}{Lemma}
\newtheorem{proposition}{Proposition}
\newtheorem{remark}{Remark}
\newtheorem{corollary}{Corollary}
\newtheorem{claim}{Claim}
\newtheorem{assumption}{Assumption}
\newcommand{\f}[2]{\frac{#1}{#2}}
\newcommand{\al}{\alpha}
\newcommand{\ga}{\gamma}
\newcommand{\de}{\delta}
\newcommand{\la}{\lambda}
\newcommand{\si}{\sigma}
\newcommand{\rone}{\mathbf R^1}
\newcommand{\cc}{\mathcal C}
\newcommand{\ch}{\mathcal H}
\newcommand{\calH}{\mathcal{H}}
\newcommand{\calL}{\mathcal{L}}
\newcommand{\calO}{\mathcal{O}}
\newcommand{\calP}{\mathcal{P}}
\newcommand{\rmd}{\mathrm{d}}
\newcommand{\rme}{\mathrm{e}}
\newcommand{\rmi}{\mathrm{i}}
\newcommand{\rmT}{\mathrm{T}}
\newcommand{\ess}{\mathrm{ess}}
\newcommand{\pt}{\mathrm{pt}}
\def\sech{\mathop\mathrm{sech}\nolimits}
\newcommand{\vI}{\bm{\mathit{I}}}
\newcommand{\vv}{\bm{\mathit{v}}}
\def\coloneqq{\mathrel{\mathop:}=}
\newcommand{\p}{\partial}
\newcommand{\beq}{\begin{equation}}
\newcommand{\eeq}{\end{equation}}
\newcommand{\beqna}{\begin{eqnarray*}}
\newcommand{\eeqna}{\end{eqnarray*}}
\newcommand{\beqn}{\begin{equation*}}
\newcommand{\eeqn}{\end{equation*}}
\newcommand{\bp}{\begin{proof}}
\newcommand{\ep}{\end{proof}}
\newcommand{\bprop}{\begin{proposition}}
\newcommand{\eprop}{\end{proposition}}
\newcommand{\bt}{\begin{theorem}}
\newcommand{\et}{\end{theorem}}
\newcommand{\bex}{\begin{Example}}
\newcommand{\eex}{\end{Example}}
\newcommand{\bc}{\begin{corollary}}
\newcommand{\ec}{\end{corollary}}
\newcommand{\bcl}{\begin{claim}}
\newcommand{\ecl}{\end{claim}}
\newcommand{\bl}{\begin{lemma}}
\newcommand{\el}{\end{lemma}}
\newcommand{\eps}{\epsilon}
\begin{document}

\title
[ Spectral stability of kinks in $\mathcal{P T}$-symmetric Klein-Gordon]
{On the spectral stability of kinks in some $\mathcal{P T}$-symmetric variants  of  the classical  Klein-Gordon Field Theories }

\author{A. Demirkaya}

\author{M. Stanislavova }

\author{A. Stefanov}

\author{T. Kapitula}

\author{P.G. Kevrekidis  }

\address{Aslihan Demirkaya \\
Mathematics Department, \\
University of Hartford, \\
200 Bloomeld Avenue, West \\
Hartford CT 06112}
\email{demirkaya@hartford.edu}

\address{Milena Stanislavova, Atanas Stefanov \\
Department of Mathematics \\
University of Kansas\\
1460 Jayhawk Blvd\\ Lawrence, KS 66045--7523 }
\email{stefanov@math.ku.edu}

\address{Todd Kapitula \\
 Department of Mathematics and Statistics \\
 Calvin College \\
        Grand Rapids, MI 49546}
\email{tmk5@calvin.edu}

\address{Panayotis Kevrekidis \\
Lederle Graduate Research Tower\\
Department of Mathematics and  Statistics\\
University of Massachusetts\\
Amherst, MA 01003}
\email{kevrekid@math.umass.edu}

\thanks{Stefanov's research is supported in part by
 NSF-DMS 0908802.  Kevrekidis is supported by 
NSF-CMMI-1000337, NSF-DMS-1312856,
as well as by the US AFOSR under FA9550-12-1-0332,
the Binational Science Foundation through grant 2010239
and the FP7, Marie Curie Actions, People, International Research
Staff Exchange Scheme (IRSES-606096). }
\date{\today}


\keywords{Klein-Gordon PDEs, $\mathcal{PT}$-symmetry, kinks, spectral stability}

\begin{abstract}
In the present work we consider  the introduction of
$\mathcal{PT}$-symmetric terms in the context of classical Klein-Gordon
field theories. We explore the implication of such terms on the spectral
stability of coherent structures, namely kinks. We find that the conclusion
critically depends on the location of the kink center  relative to the
center of the $\mathcal{P T}$-symmetric term. The main result is that if
these two points coincide, the kink's spectrum remains on the imaginary
axis and the wave is spectrally stable. If the kink is centered on the
``lossy side'' of the medium, then it becomes stabilized. On the other
hand, if it becomes centered on the ``gain side'' of the medium, then it is
destabilized. The consequences of these two possibilities on the
linearization (point and essential) spectrum are discussed in some detail.
\end{abstract}

\maketitle

\section{Introduction}

In the last 15 years, since its inception in the context of
linear quantum mechanics~\cite{R1}, the study of systems with $\mathcal{P T}$
symmetry has gained tremendous momentum. While it was originally
proposed as a modification/extension of quantum mechanics,
its purview has gradually become substantially wider.
This stemmed to a considerable degree from the realization
that other areas such as optics might be ideally suited not
only for its theoretical study~\cite{ziad,Ramezani,Muga},
but also for its experimental realization~\cite{salamo,dncnat}.
Hence, this effort has motivated a wide range of explorations
focusing, among others, on the study of solitary waves
and breathers in lattice and continuum
systems~\cite{Abdullaev,baras2,baras1,malom1,malom2,Nixon,Dmitriev,Pelin1,Sukh},
as well as on the complementary aspect of low-dimensional (oligomer or
plaquette) settings~\cite{Li,Guenter,suchkov,Sukhorukov,ZK}.

 Most of the above developments, however, have focused on the
realm of Schr{\"o}dinger type operators. However, other classes
of Hamiltonian wave-bearing systems are of considerable interest
in their own right, with a notable example being Klein-Gordon equations.
 This arises not only in the context of field theories
(as e.g. in high energy/particle physics)~\cite{peyrard,dodd},
but also in that of discrete (lattice nonlinear dynamical) systems
with applications both to mechanics
(e.g., arrays of coupled pendula; see e.g. for a recent
example~\cite{anderson}) and to electrical systems (such as circuits
consisting of e.g. capacitive and inductive elements~\cite{remoiss}).
It is interesting to note here that both at the mechanical
level~\cite{bend_mech} and at the electrical one~\cite{R21,tsampas2}
realizations of $\mathcal{P T}$ symmetry and its breaking
have been recently implemented,
while a Klein-Gordon setting has also been explored theoretically
for so-called $\mathcal{P T}$ symmetric nonlinear metamaterials and
the formation of gain-driven discrete breathers therein~\cite{lazar}.

The aim of the present work is to formulate a prototypical field theoretic
setting where gain and loss coexist in a balanced $\mathcal{P T}$ symmetric
form in a Klein-Gordon model, around a central point which, without loss of
generality, will be assumed to be at $0$. This addition to the standard
Klein-Gordon model $u_{tt}=u_{xx} - V'(u)$ will come into the form of a
``dashpot'' term $\gamma(x) u_t$. However, contrary to the usual setting
where it has a negative definite prefactor incorporating dissipation (see
e.g. the relevant local dissipation term in~\cite{anderson}), here this term
will have a non-definite sign. In fact, to preserve the $\mathcal{P T}$
symmetry, $\gamma(x)$ will be anti-symmetric so that the transformation
$(x,t)\mapsto(-x,-t)$ will leave the equation invariant. This suggests that
there is a region where $\gamma(x)
>0$, representing (without loss of generality) the ``lossy'' side of the
medium, and a region where $\gamma(x)<0$, representing the ``gain'' side of
the medium. For definiteness, in our numerical simulations we will assume
that $\gamma(x)>0$ for $x<0$ (lossy side) and $\gamma(x)<0$ for $x>0$ (gain
side).

Our principal aim is to explore the impact of this term on the
stability properties of
the coherent structures of this equation. Arguably, the most important among
these are the ubiquitous kinks (heteroclinic connections in the ODE phase
space which connect two distinct fixed points). Of substantial interest are
also the relatively fragile breathers that exist in a time periodic
exponentially localized form for some models (such as the sine-Gordon
equation with $V(u)=1- \cos(u)$), but do not exist as such for others (such
as the $\phi^4$ model with $V(u)=u^2-\frac{1}{2} u^4$). Here, given the
fragility of the latter (breathers), we will focus solely on the former
(kinks). Moreover, even for the kinks we will consider the simpler stationary
problem, as the presence of $\gamma(x)$ and its breaking of translational
invariance destroys the Lorentz invariance which produces traveling solutions
out of static ones. Hence, it is not at all clear that traveling solutions
exist in the model in the presence of the $\mathcal{P T}$ symmetric
perturbation considered herein. Nevertheless, as the static problem $u_{xx} -
V'(u)=0$ is {\it unaffected} by our unusual antisymmetric dashpot, the steady
state (kink) solutions persist in the proposed model and are, in fact, still
available in explicit analytical form. For the sine-Gordon model they read
$\phi(x)=4 \arctan(\rme^{(x-x_0)})$, while for $\phi^4$, they are
$\phi(x)=\tanh(x-x_0)$. In each case, they can be centered at any $x_0$ along
the real line. The fundamental question is therefore one of their spectral
stability and it is that which we will try to resolve herein.

In particular, our mathematical analysis and numerical computations
lead to the following conclusions:
\begin{enumerate}
\item if $x_0=0$, i.e., the kink is centered at the interface between gain
    and loss, then the spectrum is unaffected (with the exception of a
    possible shift of point spectrum imaginary eigenvalues which can move
    along the imaginary spectral axis) and the kink is spectrally stable;
\item if the kink is centered on the ``lossy'' side of the medium, the kink
    will be spectrally stable (excluding the possibility of an eigenvalue
    with a very small real part), with one of the formerly neutral
    eigenvalues (associated with translation) moving to the left half plane;
\item if the kink is centered on the ``gain'' side of the medium, it will
    be spectrally unstable with one of the formerly neutral eigenvalues
    (associated with translation) moving to the right half plane.
\end{enumerate}

Our presentation is structured as follows. In section 2, we present the
models and their solutions, and we set up the relevant spectral problem. In
section 3 we prove a general result about the spectrum of a quadratic
eigenvalue problem which can be considered to be a generalization of the
problem associated with our two case examples. In section 4 we present
some numerical computations which illustrate the theory. We conclude in
section 5 with a brief discussion of our conclusions, and provide a possible
list of future questions to be addressed.

\section{Theoretical Setup and Principal Theorems}

We consider the following basic PDE
\begin{eqnarray}\label{5}
u_{tt}-u_{xx} +\eps \gamma(x) u_t   + f(u) = 0
\end{eqnarray}
Here $f(u)$ is a smooth nonlinear function that depends on the field $u$, and
$\gamma(x)$ is a function accounting for the presence of gain/loss in the
system. For $\eps = 0$, and for some class of  nonlinearities, there exist
static kink solutions $\phi$, which will be the main object of consideration
here. These will satisfy the following ODE
\begin{equation}\label{7}
-\phi_{xx}+f(\phi)=0.
\end{equation}
As mentioned in the introduction, we will be particularly interested in the
following examples:
\begin{itemize}
\item the sine-Gordon field theory, $f(u)=\sin(u)$, and the kink solution
    will be $\phi(x)=4 \arctan(\rme^x)$
\item the $\phi^4$ field theory, $f(u)=2 (u^3 - u)$, and the kink assumes
    the form $\phi(x)=\tanh(x)$.
\end{itemize}

The $\mathcal{P T}$ symmetry requires that the equation be invariant under
the mapping $(x,t)\mapsto(-x,-t)$ (spatial reflection and time reversal).
Clearly, the operator $\p_t^2-\p_x^2$ preserves the symmetry. In order for
the operator $\ga(x)\partial_t$ to also have this property, we need
$\gamma(x)$ to be an odd function. This will be assumed throughout the rest
of this paper.

Perturbing about the kink solution via the substitution
$u(t,x)=\phi(x)+v(t,x)$, we obtain the linearized PDE
\begin{equation}
\label{15}
v_{tt}-v_{xx} +\eps \ga(x) v_t   + f'(\phi) v=0.
\end{equation}
In our examples we have that
\[
\lim_{x\to \pm \infty} f'(\phi(x))=\sigma^2>0,
\]
so that $W(x)\coloneqq\sigma^2-f'(\phi(x))$ is an effective potential which
decays exponentially fast as $x\to\pm\infty$. We will henceforth assume that
$\gamma(x)$ is also exponentially localized. Introducing the Schr\"odinger
operator $\ch$ with domain $H^2(\rone)$ in the form
$$
\ch = -\p_x^2+f'(\phi)= -\p_x^2-W(x)+\sigma^2,
$$
we can rewrite the linearized problem as
\[
v_{tt}+\epsilon\gamma(x)v_t+\calH v=0.
\]
Upon making the transformation
\[
v(x,t)\mapsto v(x)\rme^{\lambda t}
\]
we have the spectral problem
\[
\lambda^2v+\epsilon\la \gamma(x)v+\calH v=0.
\]
If the real part of the eigenvalue is positive, the wave will be spectrally
unstable; otherwise, the wave is spectrally stable.

The Schr\"odinger operator $\ch$ is self-adjoint, and by Weyl's theorem its
essential spectrum is given by $\si_{\ess}(\ch)=[\sigma^2,+\infty)$.
Moreover, since solutions to the existence problem (\ref{7}) are invariant
under spatial translation, it will be the case that the operator has a
nontrivial kernel with $\ch(\phi_x)=0$. Alternatively, this fact follows
immediately upon differentiating \eqref{7} with respect to $x$. In the next
section we will stay away from the concrete $\ch$ that arises in our
application, and instead we will work with a general Schr\"odinger operator
which satisfies relevant properties.

\section{The quadratic eigenvalue problem}

We have properly motivated our study of the following spectral problem:
\begin{equation}\label{e:31}
\calP_2(\lambda)v\coloneqq\lambda^2v+\epsilon\lambda\gamma(x)v+\calH v=0.
\end{equation}
Here $\calH$ represents a self-adjoint Schr\"odinger operator on
$H^2(\mathbb{R})$,
\[
\calH\coloneqq-\partial_x^2-W(x)+\sigma^2,\quad\sigma>0,
\]
and $W(x)$ is a smooth function which decays exponentially fast,
\[
|W(x)|\le C\rme^{-\alpha|x|},\quad\alpha>0.
\]
We assume that $\gamma(x)$ is a smooth \textit{odd} function which satisfies
a similar exponential decay estimate as $W(x)$.

The exponential decay assumption on the potential $W(x)$ means that $\calH$
is a relatively compact perturbation of the constant coefficient operator
\[
\calH_\infty\coloneqq-\partial_x^2+\sigma^2
\]
(see \cite[Chapter~3.1]{Kap2} for the details). The essential spectrum of
$\calH$ is the spectrum of $\calH_\infty$, and is given by
\[
\sigma_{\ess}(\calH)=[\sigma^2,+\infty).
\]
Since $\calH$ is a relatively compact perturbation of an operator of Fredholm
index zero, the rest of the spectrum will be comprised of point eigenvalues,
and each of these will have finite algebraic multiplicity. Moreover, basic
Sturm-Liouville theory tells us that the point eigenvalues will be real and
simple (e.g., see \cite[Chapter~2.3]{Kap2}). Regarding the point spectrum of
the Schr\"odinger operator $\calH$, we make the following spectral
assumptions:

\begin{assumption}\label{ass:31}
The point spectrum of $\calH$, say $\sigma_0,\sigma_1^2,\dots,\sigma_n^2$, is
nonnegative and is ordered as
\[
0=\sigma_0<\sigma_1^2<\cdots<\sigma_n^2<\sigma^2.
\]
The associated eigenfunctions are labelled as $\psi_0,\psi_1,\dots,\psi_n$.
\end{assumption}

Quadratic eigenvalue problems are well-studied, and we refer to
\cite{Kap1,StanStef1,StanStef2,Pel1} for some relevant results and
references. One important result that we will use is that the quadratic
problem (\ref{e:31}) is equivalent to the linear eigenvalue problem
\begin{equation}\label{e:32}
\calL\left(\begin{array}{c}v\\w\end{array}\right)=
\lambda\left(\begin{array}{c}v\\w\end{array}\right),\quad
\calL\coloneqq\left(\begin{array}{rc}0&1\\-\calH&-\epsilon\gamma(x)\end{array}\right).
\end{equation}
As a consequence of this equivalence and the decay assumptions on $W(x)$ and
$\gamma(x)$, the operator $\calL-\lambda\vI_2$ is Fredholm of index zero for
all $\lambda$ which is not in the essential spectrum, $\sigma_{\ess}(\calL)$.
The essential spectrum is found by considering the spectrum of the asymptotic
operator
\[
\calL_\infty\coloneqq\left(\begin{array}{cc}0&1\\-\calH_\infty&0\end{array}\right)
\]
(the asymptotic operator is found by taking the limit $|x|\to\infty$ for
$\calL$). The essential spectrum is straightforward to compute, and we find
it is purely imaginary,
\[
\sigma_{\ess}(\calL)=(-\rmi\infty,-\rmi\sigma]\cup[\rmi\sigma,+\rmi\infty).
\]
Regarding the point spectrum, $\sigma_{\pt}(\calL)$, because $\calL$ is a
relatively compact perturbation of $\calL_\infty$ it will be the case that
there are only a finite number of such eigenvalues, and each has finite
algebraic multiplicity. Moreover, we have the symmetries as detailed below:

\begin{lemma}\label{le:31}
Suppose that $\epsilon\ge0$. If $W(x)$ is even, then the point spectrum
satisfies the Hamiltonian symmetry,
$\{\pm\lambda,\pm\overline{\lambda}\}\subset\sigma_{\pt}(\calL)$. If $W(x)$
is not even, then all that can be said is the point spectrum is symmetric
with respect to the real axis,
$\{\lambda,\overline{\lambda}\}\subset\sigma_{\pt}(\calL)$.
\end{lemma}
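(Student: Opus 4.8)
The plan is to establish the two claimed symmetries by exhibiting explicit symmetry operators that conjugate the problem into its complex conjugate and into its negative, and then to read off the corresponding orbits in the point spectrum. It is cleanest to argue at the level of the quadratic problem \eqref{e:31}, since any eigenpair $(\lambda,v)$ of \eqref{e:31} with $v\in H^2(\rone)$ corresponds to the eigenvector $(v,\lambda v)^{\mathrm T}$ of $\calL$, and conversely; thus $\lambda\in\sigma_{\pt}(\calL)$ if and only if \eqref{e:31} admits a nonzero $H^2$ solution at that $\lambda$.

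First I would dispose of the reflection across the real axis, which holds with no evenness hypothesis. Because $\calH$ has real coefficients, $\gamma$ is real valued, and $\epsilon$ is real, taking the complex conjugate of \eqref{e:31} shows that if $(\lambda,v)$ solves the quadratic problem then so does $(\overline\lambda,\overline v)$. Since complex conjugation is a bijection of $H^2(\rone)$ sending nonzero functions to nonzero functions, this yields $\overline\lambda\in\sigma_{\pt}(\calL)$ whenever $\lambda\in\sigma_{\pt}(\calL)$, i.e. $\{\lambda,\overline\lambda\}\subset\sigma_{\pt}(\calL)$. This already proves the second assertion of the lemma.

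The extra symmetry in the even case comes from the parity operator $(\calP v)(x)\coloneqq v(-x)$, which is an involution preserving $H^2(\rone)$. The key algebraic facts are that $\calP$ commutes with $-\p_x^2$ always, commutes with multiplication by $W$ precisely when $W$ is even, and \emph{anti}commutes with multiplication by $\gamma$ because $\gamma$ is odd: writing $M_\gamma$ for this multiplication, $(\calP M_\gamma f)(x)=\gamma(-x)f(-x)=-\gamma(x)(\calP f)(x)$, so $\calP M_\gamma=-M_\gamma\calP$. Hence, when $W$ is even, $\calP\calH=\calH\calP$ while $\calP M_\gamma=-M_\gamma\calP$. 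Applying $\calP$ to \eqref{e:31} and using these two relations flips the sign of the middle term, so that $u\coloneqq\calP v$ satisfies $\lambda^2u-\epsilon\lambda\gamma(x)u+\calH u=0$; that is, $(-\lambda,\calP v)$ solves \eqref{e:31}. Therefore $-\lambda\in\sigma_{\pt}(\calL)$, and combining this with the conjugation symmetry gives the full Hamiltonian orbit $\{\pm\lambda,\pm\overline\lambda\}\subset\sigma_{\pt}(\calL)$. (Equivalently, at the level of the first-order operator one checks directly that $S\calL S^{-1}=-\calL$ for $S\coloneqq\mathrm{diag}(\calP,-\calP)$, using the same commutation and anticommutation relations; this is the cleaner route for anyone preferring to stay with $\calL$.)

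I do not expect a serious obstacle here: the argument is essentially bookkeeping about how $\calP$ interacts with each summand of $\calL$, and the only conceptual point requiring care is that the \emph{oddness} of $\gamma$ is exactly what supplies the sign flip $\lambda\mapsto-\lambda$ under parity, which is what upgrades the two-fold conjugation symmetry to the four-fold Hamiltonian symmetry. The routine verifications are that the symmetry maps preserve the domain $H^2(\rone)$ and send eigenfunctions to nonzero eigenfunctions, both immediate since $\calP$ and complex conjugation are invertible on $H^2(\rone)$. Finally, it is worth noting why the dichotomy in the statement is sharp: when $W$ fails to be even, $\calP$ no longer commutes with $\calH$, the parity argument collapses, and only the conjugation symmetry survives.
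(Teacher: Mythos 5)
Your proof is correct and follows essentially the same route as the paper: complex conjugation (using the realness of the coefficients) yields the reflection across the real axis, and the spatial reflection $x\mapsto -x$ — which you package as the parity operator $\calP$ with its commutation/anticommutation relations, and which the paper writes as the change of variables $y=-x$ in the quadratic pencil — supplies the sign flip $\lambda\mapsto-\lambda$ when $W$ is even and $\gamma$ is odd. Your phrasing via $S\calL S^{-1}=-\calL$ is a slightly cleaner bookkeeping of the identical idea, and it even sidesteps the paper's unfortunate reuse of the symbol $\gamma$ as a spectral parameter in its own proof.
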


\begin{proof}
Setting $\vv=(v\,\,w)^\rmT$, the fact that $\calL$ has only real-valued
entries implies that
\[
\calL\vv=\lambda\vv\quad\Rightarrow\quad\calL\overline{\vv}=\overline{\lambda}{\overline{\vv}}.
\]
Thus, the spectral symmetry with respect to the real axis is established. Now
suppose that $W(x)$ is even. Setting $y=-x$ gives
\[
\gamma(-y)=-\gamma(y),\quad W(-y)=W(y),
\]
so that the quadratic problem becomes
\[
\lambda^2z-\epsilon\lambda\gamma(y)z+\calH z=0,\quad z(y)=v(-y).
\]
The Schr\"odinger operator $\calH$ is invariant under the spatial reflection.
Upon setting $\gamma=-\lambda$ we recover the original eigenvalue problem,
i.e.,
\[
\calP_2(\gamma)z=0.
\]
Thus, under the assumption that $W(x)$ is even we see that if $\lambda$ is an
eigenvalue with associated eigenfunction $v(x)$, then $-\lambda$ is an
eigenvalue with associated eigenfunction $v(-x)$.
\end{proof}

Since the origin is an eigenvalue for $\calH$ with eigenfunction $\psi_0$, it
will be an eigenvalue for $\calL$ with associated eigenfunction
$(\psi_0\,\,0)^\rmT$. Moreover, since $\calH$ is a Schr\"odinger operator, it
is clear that the geometric multiplicity of the zero eigenvalue is one.
Regarding the algebraic structure of this eigenvalue we have:

\begin{lemma}\label{le:5}
Suppose that $\epsilon>0$ is sufficiently small. The eigenvalue
$\{0\}\subset\sigma_{\pt}(\calL)$ has algebraic multiplicity two if
\[
\langle\gamma(x)\psi_0,\psi_0\rangle=0,\quad
\langle f,g\rangle\coloneqq\int_{-\infty}^{+\infty}f(x)g(x)\,\rmd x.
\]
Otherwise, it has algebraic multiplicity one.
\end{lemma}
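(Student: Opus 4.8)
The plan is to compute the length of the single Jordan chain at the eigenvalue $0$, since the algebraic multiplicity equals that length once the geometric multiplicity is known to be one. First I would confirm the latter directly: if $\calL(v\,\,w)^\rmT=0$ then $w=0$ and $\calH v=0$, so $v$ is a multiple of $\psi_0$ and $\ker\calL$ is spanned by $\xi_0\coloneqq(\psi_0\,\,0)^\rmT$. Hence the algebraic multiplicity is at least two precisely when this chain can be extended by one step, and the whole statement reduces to a Jordan-chain analysis built on repeated use of the Fredholm alternative for $\calH$.

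Next I would write out the rank-two step. A generalized eigenvector $\xi_1=(v\,\,w)^\rmT$ must satisfy $\calL\xi_1=\xi_0$, which componentwise reads $w=\psi_0$ and $-\calH v-\epsilon\gamma w=0$; equivalently,
\[
\calH v=-\epsilon\gamma(x)\psi_0.
\]
Because $\calH$ is self-adjoint with one-dimensional kernel spanned by $\psi_0$, the Fredholm alternative makes this solvable if and only if the right-hand side is orthogonal to $\psi_0$, i.e.\ $\epsilon\langle\gamma\psi_0,\psi_0\rangle=0$. For $\epsilon>0$ this is exactly the condition $\langle\gamma\psi_0,\psi_0\rangle=0$ of the statement. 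This immediately settles the non-degenerate case: if $\langle\gamma\psi_0,\psi_0\rangle\ne0$ there is no rank-two vector, so the algebraic multiplicity is one (and I note that no smallness of $\epsilon$ is needed for this direction).

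It remains to handle the degenerate case $\langle\gamma\psi_0,\psi_0\rangle=0$, in which the rank-two equation is solvable with $\xi_1=(v\,\,\psi_0)^\rmT$ and $v=-\epsilon\,\calH^{\dagger}(\gamma\psi_0)$ modulo $\psi_0$, where $\calH^{\dagger}$ denotes the reduced inverse on $\{\psi_0\}^{\perp}$; so the multiplicity is at least two. To prove it is exactly two I would attempt the rank-three step $\calL\xi_2=\xi_1$. Writing $\xi_2=(a\,\,b)^\rmT$, this forces $b=v$ and $\calH a=-\psi_0-\epsilon\gamma v$, whose solvability condition is
\[
\|\psi_0\|^2+\epsilon\langle\gamma v,\psi_0\rangle=0.
\]
The key observation is that this cannot hold for small $\epsilon$: the leading term $\|\psi_0\|^2$ is strictly positive and $\epsilon$-independent, while $v=O(\epsilon)$ forces the correction $\epsilon\langle\gamma v,\psi_0\rangle$ to be $O(\epsilon^2)$. (The correction is also well-defined, since replacing $v$ by $v+c\psi_0$ changes $\langle\gamma v,\psi_0\rangle$ by $c\langle\gamma\psi_0,\psi_0\rangle=0$.) Hence for $\epsilon>0$ sufficiently small the rank-three equation has no solution, the chain terminates, and the algebraic multiplicity is exactly two.

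The two Fredholm computations are routine; the only place that genuinely needs care is the last paragraph, namely tracking the $\epsilon$-dependence of the rank-two vector $v$ and confirming that the $O(\epsilon^2)$ correction cannot cancel the positive leading term $\|\psi_0\|^2$. This is precisely where the hypothesis that $\epsilon$ be sufficiently small enters, and it constitutes the main (though mild) obstacle of the argument.
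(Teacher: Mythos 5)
Your proposal is correct and follows essentially the same route as the paper: a Jordan-chain analysis using the Fredholm alternative for $\calH$ at the rank-two step (giving the condition $\langle\gamma\psi_0,\psi_0\rangle=0$) and at the rank-three step, where the solvability condition $\|\psi_0\|^2=\epsilon^2\langle\calH^{\dagger}(\gamma\psi_0),\gamma\psi_0\rangle$ fails for small $\epsilon$ because the right-hand side is $\calO(\epsilon^2)$ while the left-hand side is a fixed positive constant. Your version is in fact slightly more careful than the paper's, since you use the reduced inverse $\calH^{\dagger}$ explicitly and check that the correction term is independent of the choice of $v$ modulo $\psi_0$, whereas the paper writes $\calH^{-1}$ informally and additionally invokes the positivity of $\langle\calH^{-1}[\gamma\psi_0],\gamma\psi_0\rangle$, which your $\calO(\epsilon^2)$ bound renders unnecessary.
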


\begin{proof}
The generalized kernel is found by solving
\[
\calL\left( \begin{array}{c} v \\ w \end{array}\right)=
\left( \begin{array}{c} \psi_0 \\ 0 \end{array}\right),
\]
which with $w=\psi_0$ gives
\[
\calH v = -\epsilon\gamma(x)\psi_0.
\]
By the Fredholm alternative this equation can be solved if and only if
$\gamma(x)\psi_0\in\mathrm{Ker}(\calL)^\perp$, i.e.,
\[
\langle\gamma(x)\psi_0,\psi_0\rangle=0.
\]

Supposing that the algebraic multiplicity is at least two, we will now show
that the Jordan chain cannot be any longer. We start with
\[
\calL\left(\begin{array}{c} -\epsilon\calH^{-1}[\gamma(x)\psi_0] \\ \psi_0 \end{array}\right)=
\left( \begin{array}{c} \psi_0 \\ 0 \end{array}\right).
\]
If the chain is longer, then we can solve the linear system
\[
\calL\left( \begin{array}{c} v \\ w \end{array}\right)=
\left(\begin{array}{c} -\epsilon\calH^{-1}[\gamma(x)\psi_0] \\ \psi_0 \end{array}\right),
\]
which with $w=-\epsilon\calH^{-1}[\gamma(x)\psi_0]$ gives
\[
\calH v=-\psi_0+\epsilon^2\gamma(x)\calH^{-1}[\gamma(x)\psi_0].
\]
Again invoking the Fredholm alternative, we see that this system has a
solution if and only if
\[
\begin{split}
\langle\psi_0,\psi_0\rangle&=\epsilon^2\langle\gamma(x)\calH^{-1}[\gamma(x)\psi_0],\psi_0\rangle\\
&=\epsilon^2\langle\calH^{-1}[\gamma(x)\psi_0],\gamma(x)\psi_0\rangle.
\end{split}
\]
By the spectral assumption on $\calH$, i.e., all of the eigenvalues are
nonnegative, as well as the assumption that
$\langle\gamma(x)\psi_0,\psi_0\rangle=0$, we know that
\[
\langle\calH^{-1}[\gamma(x)\psi_0],\gamma(x)\psi_0\rangle
>0.
\]
Thus, the equation can be guaranteed to not have a solution only if
$\epsilon>0$ is not too large.
\end{proof}

\begin{remark}
If $\epsilon=0$, then there is a Jordan chain of length two, and the
generalized eigenfunction is given by $(0\,\,\psi_0)^\rmT$.
\end{remark}

Now that we understand the spectral structure at the origin, we consider the
question of whether or not there can be eigenvalues which are arbitrarily
large in magnitude with nonzero real part. Before doing so, let us consider
the structure of the spectrum when $\epsilon=0$. The eigenvalue problem
becomes
\[
\calH v=-\lambda^2v,
\]
so that the eigenvalues for $\calL$ are intimately related to those for
$\calH$. Indeed, as a consequence of the spectral assumption on $\calH$, we
see that there are precisely $n$ purely imaginary simple eigenvalues with
positive imaginary part
\[
\lambda_j=\rmi\sigma_j,\quad j=1,\dots,n.
\]
The spectral symmetry of Lemma \ref{le:31} guarantees that there are $n$
purely imaginary eigenvalues with negative imaginary part. An associated
eigenfunction for each eigenvalue with positive imaginary part is
$(\psi_j\,\,\rmi\psi_j)^\rmT$. Thus, arbitrarily large eigenvalues with
nonzero real part for $\epsilon>0$ can exist only if they are somehow ejected
from the essential spectrum.

Before continuing, we shall need the following technical result, which is
sometimes called {\it limited absorption principle}. The limited  absorption
principle is a  standard result, whose proof is based on appropriate
estimates for the Jost solutions of the corresponding operator $L$.  Its
proof goes back to the fundamental work of~\cite{Agmon}, which
was extended later by~\cite{KatoJensen} and~\cite{DeiftTrub}.

\begin{lemma}\label{le:40}
Consider the standard weighted $L^2$ space defined by the norm
\[
\|f\|_{L^2_\alpha}\coloneqq\left( \int_{-\infty}^{+\infty}
  |f(x)|^2 (1+|x|^2)^{\al}\,\rmd x\right)^{1/2}=:\| <x>^{\al} f\|_{L^2}.
\]
For every $\de>0$ there exists $C=C_{\de, W}$ so that for all $\la\in \cc$
with
\[
\mathrm{dist}(\la,\si_{\pt}(\calH))>\de,
\]
we have the estimate
\begin{equation}\label{eq:po}
\|(\calH-\la^2)^{-1}\|_{L^2_{\f{1}{2}+\de}\to L^2_{-\f{1}{2}-\de}}\leq
\f{C}{|\la|}.
\end{equation}
\end{lemma}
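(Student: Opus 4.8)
The plan is to construct the resolvent kernel of $\calH-\la^2$ explicitly from Jost solutions and to read off the decay $C/|\la|$ from the growth of the associated Wronskian at high energy, the weighted operator bound then following from a Schur-type estimate. It is convenient to set $k=\sqrt{\la^2-\si^2}$, with the branch $\mathrm{Im}\,k\ge 0$, so that the eigenvalue equation $\calH u=\la^2u$ becomes $-u''-W(x)u=k^2u$ and $\calH-\la^2$ is an exponentially decaying perturbation of $-\p_x^2-k^2$. For large $|\la|$ one may argue purely perturbatively: writing $R_0(\la)\coloneqq(\calH_\infty-\la^2)^{-1}$, the factorization $\calH-\la^2=(I-WR_0(\la))(\calH_\infty-\la^2)$ gives $(\calH-\la^2)^{-1}=R_0(\la)(I-WR_0(\la))^{-1}$, and since multiplication by the exponentially localized $W$ maps $L^2_{-1/2-\de}\to L^2_{1/2+\de}$ boundedly, the operator $WR_0(\la)$ has norm $O(1/|\la|)$ on $L^2_{1/2+\de}$; the Neumann series then converges for $|\la|$ large and reduces the high-energy estimate to the free one. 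The substance of the proof is the free kernel bound together with control of the inverse at bounded energies.

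First I would construct the Jost solutions $f_\pm(x,k)$ of $-f''-Wf=k^2f$ normalized by $f_\pm(x,k)\sim\rme^{\pm\rmi kx}$ as $x\to\pm\infty$. Setting $m_\pm(x,k)=\rme^{\mp\rmi kx}f_\pm(x,k)$, these solve Volterra integral equations whose kernels carry the factor $W$; the decay $|W(x)|\le C\rme^{-\al|x|}$ guarantees that the Volterra iteration converges, that $m_\pm(\cdot,k)$ is bounded uniformly for $\mathrm{Im}\,k\ge 0$ and analytic for $\mathrm{Im}\,k>0$ with continuous boundary values on the real axis, and that $|m_\pm(x,k)-1|\le C/(1+|k|)$. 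In particular $|f_\pm(x,k)|\le C\rme^{-\mathrm{Im}(k)(\pm x)}$, so the product $f_-(y,k)f_+(x,k)$ for $x\ge y$ carries the decay factor $\rme^{-\mathrm{Im}(k)|x-y|}$.

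The resolvent kernel is then $G(x,y;\la)=d(k)^{-1}f_-(x_\wedge,k)f_+(x_\vee,k)$ (up to an overall sign), with $x_\wedge=\min(x,y)$, $x_\vee=\max(x,y)$, and $d(k)=\mathcal W[f_-,f_+](k)$ the Wronskian, which is the spectral determinant: its zeros in $\mathrm{Im}\,k\ge 0$ are the eigenvalues of $\calH$ together with a possible threshold resonance at $k=0$. Since $m_\pm\to 1$, one has $d(k)\to 2\rmi k$ as $|k|\to\infty$, whence $|d(k)|\ge c|k|\ge c'|\la|$ at high energy and, by the analytic Fredholm/Jost theory, $|d(k)|\ge c_\de>0$ uniformly on the bounded set of $\la$ at distance more than $\de$ from $\si_{\pt}(\calH)$. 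Combining the Jost bounds with this lower bound yields the pointwise estimate $|G(x,y;\la)|\le C/|\la|$; on the bounded regime one uses that $0\in\si_{\pt}(\calH)$ forces $|\la|\ge\mathrm{dist}(\la,\si_{\pt}(\calH))>\de$, so the bounded factor $|\la|$ may be absorbed into the constant. Finally, since $\langle x\rangle^{-1/2-\de}\in L^2(\mathbb R)$ exactly when $\de>0$, the sandwiched operator with kernel $\langle x\rangle^{-1/2-\de}G(x,y;\la)\langle y\rangle^{-1/2-\de}$ is dominated by the rank-one kernel $\tfrac{C}{|\la|}\langle x\rangle^{-1/2-\de}\langle y\rangle^{-1/2-\de}$, whose $L^2\to L^2$ norm is $\tfrac{C}{|\la|}\|\langle\cdot\rangle^{-1/2-\de}\|_{L^2}^2$; this is precisely the asserted estimate \eqref{eq:po}.

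The main obstacle is the uniformity of the lower bound on $d(k)$ as $\la$ approaches the edge of the essential spectrum, i.e.\ as $k\to 0$ (equivalently $\la\to\pm\si$): in one dimension the free problem already exhibits a $1/|k|$ singularity of the resolvent coming from the constant ``resonance,'' so the clean high-energy argument degenerates there, and the weighted estimate near threshold hinges on the precise small-$k$ asymptotics of the Jost solutions and on $d(0)\ne 0$. It is exactly this low-energy analysis that requires the fine Jost-function estimates going back to Agmon and refined by Kato--Jensen and Deift--Trubowitz, and the separation $\mathrm{dist}(\la,\si_{\pt}(\calH))>\de$ is what keeps $d(k)$ bounded away from its zeros. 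The high-energy decay $1/|\la|$, by contrast, is the robust part of the statement, and is what the subsequent application to the quadratic problem ultimately uses.
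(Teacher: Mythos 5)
The paper does not actually prove Lemma \ref{le:40}: it invokes it as a standard limiting absorption principle whose proof ``is based on appropriate estimates for the Jost solutions,'' deferring entirely to \cite{Agmon}, \cite{KatoJensen}, \cite{DeiftTrub}. Your proposal reconstructs exactly that cited argument --- Jost solutions $f_\pm$, the resolvent kernel $f_-(x_<,k)f_+(x_>,k)/d(k)$, a Wronskian lower bound, and a Hilbert--Schmidt bound after sandwiching with the weights $\langle x\rangle^{-1/2-\de}$ --- so in approach you and the paper coincide. Your high-energy treatment (the factorization $(\calH-\la^2)^{-1}=R_0(\la)\bigl(I-WR_0(\la)\bigr)^{-1}$, the free kernel bound $\bigl|\rme^{\rmi k|x-y|}/(2\rmi k)\bigr|\le 1/(2|k|)$ valid uniformly for $\mathrm{Im}\,k\ge 0$, and the Neumann series once $|k|\sim|\la|$ is large) is correct, and it is the only regime the paper subsequently uses: Lemma \ref{prop:20} applies the estimate along a sequence $|\la_n|\to\infty$.

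There is, however, a genuine gap in your bounded-energy step, and it matters precisely for the paper's examples. You assert that $|d(k)|\ge c_\de>0$ uniformly on the bounded set of $\la$ with $\mathrm{dist}(\la,\si_{\pt}(\calH))>\de$. This is false whenever $W$ has a resonance at the edge of the essential spectrum, i.e.\ whenever $d(0)=0$: the lemma's hypothesis keeps $\la$ away from the finitely many eigenvalues of Assumption \ref{ass:31}, which all lie in $[0,\si^2)$, but it does \emph{not} keep $\la$ away from the branch points $\la=\pm\si$ (equivalently $k=0$), where in the resonant case the weighted resolvent blows up like $|k|^{-1}=|\la^2-\si^2|^{-1/2}$ while the right-hand side of \eqref{eq:po} stays of size $C/\si$. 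Both linearizations treated in the paper (sine-Gordon and $\phi^4$) are reflectionless, hence exactly of this resonant type, so the uniform lower bound claimed in the body of your proof fails for them; your own closing paragraph concedes this by requiring $d(0)\ne 0$, which contradicts what you asserted two paragraphs earlier. The repair is at the level of the statement rather than the analysis: either restrict the conclusion to $|\la|\ge M$ (all that Lemma \ref{prop:20} needs), or add to the hypothesis that $\la$ also stay a distance $\de$ from $\pm\si$. As written, neither your argument nor the literal statement of the lemma survives the threshold regime in the resonant case, so a complete write-up must make one of these restrictions explicit instead of asserting the uniform Wronskian bound.
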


We are now ready to show that if $\epsilon>0$ is not too large, then there
are no arbitrarily large eigenvalues.

\begin{lemma}\label{prop:20}
There exists an $\epsilon_0>0$ and $M>0$ such that if
$0\le\epsilon<\epsilon_0$, then there exist no $\lambda\in\sigma_{\pt}(\calL)$
with nonzero real part such that $|\lambda|>M$.
\end{lemma}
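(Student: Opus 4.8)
The plan is to argue by contradiction, recasting the quadratic eigenvalue equation as a fixed-point relation governed by the resolvent of $\calH$, and then to exploit the $|\lambda|^{-1}$ gain in the limited absorption estimate of Lemma \ref{le:40} so that the whole relation becomes a contraction of size $O(\epsilon)$, \emph{uniformly} in $\lambda$. Suppose for contradiction that $\lambda\in\sigma_{\pt}(\calL)$ has $\mathrm{Re}\,\lambda\neq0$ and $|\lambda|>M$, and let $v\in L^2$ be the (nonzero) first component of the associated eigenfunction, so that $(\calH+\lambda^2)v=-\epsilon\lambda\gamma(x)v$. The key preliminary observation is that when $\mathrm{Re}\,\lambda\neq0$ the number $-\lambda^2$ is either strictly negative real (if $\lambda$ is real) or genuinely non-real (if $\mathrm{Im}\,\lambda\neq0$); in either case $-\lambda^2\notin[0,+\infty)\supset\sigma(\calH)$, so $\calH+\lambda^2$ is boundedly invertible on $L^2$ and one may write $v=-\epsilon\lambda(\calH+\lambda^2)^{-1}[\gamma(x)v]$.

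Next I would estimate the right-hand side in the weighted spaces of Lemma \ref{le:40}. Writing $\calH+\lambda^2=\calH-(\rmi\lambda)^2$, the limited absorption principle applies with spectral variable $\rmi\lambda$ and yields $\|(\calH+\lambda^2)^{-1}\|_{L^2_{1/2+\delta}\to L^2_{-1/2-\delta}}\le C/|\lambda|$, provided $\mathrm{dist}(\rmi\lambda,\sigma_{\pt}(\calH))>\delta$. Since $\sigma_{\pt}(\calH)\subset[0,\sigma^2)$ is bounded and $|\rmi\lambda|=|\lambda|$, the triangle inequality gives $\mathrm{dist}(\rmi\lambda,\sigma_{\pt}(\calH))\ge|\lambda|-\sigma^2$, so the hypothesis holds as soon as $M>\sigma^2+\delta$. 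Separately, because $\gamma$ decays exponentially while the weights are only polynomial, multiplication by $\gamma$ is bounded from $L^2_{-1/2-\delta}$ into $L^2_{1/2+\delta}$, with norm $C_\gamma=\sup_x|\gamma(x)|\langle x\rangle^{1+2\delta}<\infty$.

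Combining these two facts in the fixed-point relation gives
\[
\|v\|_{L^2_{-1/2-\delta}}\le\epsilon|\lambda|\cdot\frac{C}{|\lambda|}\cdot C_\gamma\,\|v\|_{L^2_{-1/2-\delta}}=\epsilon\, CC_\gamma\,\|v\|_{L^2_{-1/2-\delta}},
\]
where the decisive point is the exact cancellation of the prefactor $|\lambda|$ against the $|\lambda|^{-1}$ decay of the resolvent, so that the constant $\epsilon CC_\gamma$ does not depend on $\lambda$. Since $v\in L^2\hookrightarrow L^2_{-1/2-\delta}$ is nonzero, its weighted norm is strictly positive and may be divided out, forcing $1\le\epsilon CC_\gamma$. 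Choosing $\epsilon_0\coloneqq(CC_\gamma)^{-1}$ then contradicts $\epsilon<\epsilon_0$, which proves the statement with this $\epsilon_0$ and any $M>\sigma^2+\delta$.

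I expect the main obstacle to be the correct invocation of Lemma \ref{le:40}: one must recognize that the relevant spectral variable for $\calH+\lambda^2$ is $\rmi\lambda$ rather than $\lambda$, verify that its distance to $\sigma_{\pt}(\calH)$ stays above $\delta$ for all large $|\lambda|$ (which is where the hypothesis $|\lambda|>M$ enters), and confirm that $\calH+\lambda^2$ is genuinely invertible so that no embedded-eigenvalue ambiguity arises — this last point being exactly what $\mathrm{Re}\,\lambda\neq0$ secures. The remaining ingredients, namely the weighted multiplication bound for $\gamma$ and the contraction estimate itself, are routine.
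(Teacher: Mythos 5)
Your proposal is correct and takes essentially the same route as the paper's own proof: both recast the eigenvalue equation as the fixed-point relation $v=-\epsilon\lambda(\calH+\lambda^2)^{-1}[\gamma(x)v]$, invoke the $C/|\lambda|$ bound of Lemma \ref{le:40} so that the prefactor $\epsilon|\lambda|$ collapses to $O(\epsilon)$, and use the exponential decay of $\gamma$ to bound the weighted multiplication, yielding $1\le\epsilon\,CC_\gamma$ and hence a contradiction for small $\epsilon$. The only differences are cosmetic: the paper runs the contradiction along sequences $\epsilon_n\to0$, $|\lambda_n|\to\infty$ with $\delta=1/2$ fixed, while you work with a single $\lambda$, produce explicit $\epsilon_0$ and $M$, and spell out the two points the paper glosses over (invertibility of $\calH+\lambda^2$ when $\mathrm{Re}\,\lambda\neq0$, and that the spectral variable in Lemma \ref{le:40} is $\rmi\lambda$).
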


\begin{proof}
We prove by contradiction. Assume there is a sequence of point eigenvalues
with nonzero real part $\la_n\to \infty$ corresponding to a sequence
$\eps_n\to 0$. Let $\phi_n$ be a corresponding sequence of normalized
eigenfunctions, $\|\phi_n\|=1$, so that $\calP_2(\lambda_n)\phi_n=0$, i.e.,
\begin{equation}\label{60}
(\calH+\la_n^2)\phi_n=-\eps_n \la_n \ga(x) \phi_n.
\end{equation}
Since $  \la_n\to \infty$,  we can apply the limited  absorption principle in
Lemma \ref{le:40} to conclude that $\calH+\la_n^2$ is invertible (at least in
the sense of the weighted $L^2$ spaces as specified there), so
\begin{equation}
\label{50a}
\phi_n= -\eps_n \la_n (\calH+\la_n^2)^{-1}[\ga(x) \phi_n].
\end{equation}
Fixing $\de=1/2$ (for example) and taking $L^2_{-1}$ norms in \eqref{50a}
yields
\[
\begin{split}
\|\phi_n\|_{L^2_{-1}} &\leq \f{C}{|\la_n|}  \eps_n |\la_n| \|\ga(x) \psi_n\|_{L^2_{1}}\\
&\leq C \eps_n\left(\sup_x\left[<x>^2 |\ga(x)|\right]\right) \|\phi_n\|_{L^2_{-1}}.
\end{split}
\]
The last inequality leads to a contradiction, for $\eps_n\to0$ as
$n\to\infty$.
\end{proof}

We now consider the problem of whether or not \textit{any} eigenvalues can be
ejected from the essential spectrum when $\epsilon>0$. Alternatively, one can
think of the following as a discussion on the resonances of a Schr\"odinger
operator. Here we rely on the discussion of \textit{edge bifurcations} for
Schr\"odinger operators as presented in \cite[Chapter~9.5]{Kap2}. Recalling
the spectral symmetry about the real axis, assume that the imaginary part of
$\lambda$ is positive. Upon setting
\[
\mu^2\coloneqq\sigma^2+\lambda^2\quad\Rightarrow\quad
\lambda=\rmi\sqrt{\sigma^2-\mu^2},
\]
the eigenvalue problem $\calP_2(\lambda)v=0$ can be rewritten as
\[
v_{xx}+W(x)v=\mu^2v+\rmi\epsilon\sqrt{\sigma^2-\mu^2}\,\gamma(x)v.
\]
An important point here is that the branch point $\lambda=\rmi\sigma$ has
been transformed to $\mu=0$. The transformation defines a two-sheeted Riemann
surface in a neighborhood of the branch point. The physical sheet of this
transformation, $\mu\in(-\pi/2,\pi/2)$, corresponds to eigenvalues for the
original problem, while the resonance sheet, $\mu\in(\pi/2,3\pi/2)$,
corresponds to resonance poles for the original problem. In a neighborhood of
this branch point one can construct an analytic Evans function, $E(\mu)$,
which has the property $E(\mu)=0$ if $\mu$ is an eigenvalue (physical sheet),
or if $\mu$ is a resonance pole (resonance sheet).

First suppose that $\epsilon=0$. If $W(x)$ is a \textit{reflectionless}
potential, i.e., if $\calH$ has a resonance at the edge of the essential
spectrum, then there is a uniformly bounded but nondecaying solution, $\Psi$,
at the branch point of $\calH$,
\begin{equation}\label{e:bp2}
\calH(\Psi)=\sigma^2\Psi,\quad\Psi(x)\to\begin{cases}
\,\,1,\quad&x\to-\infty\\
\Psi_+,\quad&x\to+\infty.
\end{cases}
\end{equation}
The Evans function notes the existence of this ``eigenfunction" via $E(0)=0$
with $E'(0)\neq0$. If there is no resonance, then $E(0)\neq0$. Since the
Evans function is analytic, in a neighborhood of the branch point it will
have no other possible zeros.

Now suppose that $\epsilon>0$ is small. The zero at the branch point will
move to either the physical sheet or the resonance sheet, and in either case
it will be of $\calO(\epsilon)$. Denote this zero by $\mu_1\epsilon$, where
$\mu_1=\mu_1(\epsilon)$. It is shown in \cite[Theorem~9.5.1]{Kap2} that the
expansion for this zero as applied to this problem is given by
\begin{equation}\label{e:bp}
\mu_1=-\rmi\sigma\frac{\langle\gamma(x)\Psi,\Psi\rangle}{1+\Psi_+^2}+\calO(\epsilon).
\end{equation}
Here $\Psi$ is given in \eqref{e:bp2}. Note that to leading order this zero
is purely imaginary on the Riemann surface, and is consequently on the
boundary between the physical sheet and the resonance sheet. If it moves onto
the physical sheet, it will create an eigenvalue given by
\[
\lambda=\rmi\sqrt{\sigma^2-\mu_1^2\epsilon^2}=\rmi\left(\sigma+\calO(\epsilon^2)\right).
\]
The eigenvalue created by the edge bifurcation will be $\calO(\epsilon^2)$
close to the branch point. Because the leading order term in the expansion of
$\mu_1(\epsilon)$ is purely imaginary, the $\calO(\epsilon^2)$ term in the
expansion is purely real. Thus, if the eigenvalue has a nonzero real part it
will necessarily be of $\calO(\epsilon^3)$.  If the zero moves onto the
resonance sheet, no eigenvalue is created. On the other hand, if the
potential is not reflectionless, then no zeros will come out of the branch
point. Finally, the Evans function analysis also shows that an eigenvalue can
emerge from the essential spectrum only at the branch point.

\begin{lemma}\label{lem:bp}
Consider the spectral problem near the branch point $\lambda=\rmi\sigma$. If
the potential $W(x)$ is reflectionless, then for $\epsilon>0$ small it is
possible that there will be an eigenvalue which is $\calO(\epsilon^2)$ close
to the branch point. The real part of the eigenvalue will be of
$\calO(\epsilon^3)$. If the potential is not reflectionless, then no such
eigenvalue can emerge. In either case there will be no other eigenvalues
which have imaginary part greater than $\sigma+\calO(\epsilon^2)$.
\end{lemma}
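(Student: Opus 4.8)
The plan is to reduce the entire statement to an analysis of the zeros of the analytic Evans function $E(\mu)$ whose construction was sketched above, relying on the edge bifurcation framework of \cite[Chapter~9.5]{Kap2}. First I would record the two features of the $\epsilon=0$ problem that drive everything: in the desingularizing coordinate $\mu$ (with $\lambda=\rmi\sqrt{\sigma^2-\mu^2}$) the branch point $\lambda=\rmi\sigma$ becomes $\mu=0$, and $E$ has a zero there precisely when $W(x)$ is reflectionless; in that case the zero is simple, $E(0)=0$ with $E'(0)\neq0$, via the nondecaying solution $\Psi$ of \eqref{e:bp2}, whereas if $W$ is not reflectionless then $E(0)\neq0$. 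Since $\gamma(x)$ enters the linearized operator $\calL$ analytically in $\epsilon$, the Evans function depends analytically on the pair $(\mu,\epsilon)$ in a fixed neighborhood of $(0,0)$, and the whole lemma becomes a perturbation statement about this analytic family.

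In the reflectionless case I would track the simple zero $\mu=0$ of $E(\cdot,0)$ by the analytic implicit function theorem, obtaining a zero $\mu_1(\epsilon)\epsilon$ with the leading expansion \eqref{e:bp}. The key algebraic observation is that $\langle\gamma(x)\Psi,\Psi\rangle$ is real, because both $\gamma$ and $\Psi$ are real-valued, so the leading coefficient in \eqref{e:bp} is purely imaginary; write $\mu_1=-\rmi\sigma\kappa+\calO(\epsilon)$ with $\kappa\coloneqq\langle\gamma(x)\Psi,\Psi\rangle/(1+\Psi_+^2)\in\mathbb{R}$. Substituting into $\lambda=\rmi\sqrt{\sigma^2-\mu_1^2\epsilon^2}$ and expanding the square root gives $\lambda=\rmi\sigma(1+\tfrac12\kappa^2\epsilon^2)+\calO(\epsilon^3)$, which establishes both that the bifurcating eigenvalue sits within $\calO(\epsilon^2)$ of the branch point and that its $\calO(\epsilon^2)$ correction is purely imaginary; hence any nonzero real part must first appear at order $\epsilon^3$, coming from the real part of the $\calO(\epsilon)$ correction to $\mu_1$. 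Because $\mu_1\epsilon$ is purely imaginary to leading order it lies exactly on the interface between the two sheets, so whether a genuine $L^2$ eigenvalue is actually created depends on the sign of the first correction that pushes it onto the physical sheet versus the resonance sheet — this is the source of the word \emph{possible} in the statement.

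For the non-reflectionless case the argument is shorter: since $E(0,0)\neq0$ and $E$ is continuous (indeed analytic) in $(\mu,\epsilon)$, it remains bounded away from zero on a neighborhood of the branch point for all sufficiently small $\epsilon$, so no zero, and hence no eigenvalue, can emerge there. The final claim — that no eigenvalue has imaginary part exceeding $\sigma+\calO(\epsilon^2)$ — follows by combining three facts: the Evans function analysis shows an eigenvalue can detach from the essential spectrum only at the branch point; analyticity of $E(\cdot,0)$ forces $\mu=0$ to be its only zero in a fixed neighborhood, so by continuity the tracked zero $\mu_1(\epsilon)\epsilon$ is the only one nearby for small $\epsilon$; and Lemma \ref{prop:20} rules out large eigenvalues with nonzero real part. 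I expect the main obstacle to be the careful sheet-tracking in the reflectionless case: verifying rigorously that the hypotheses of \cite[Theorem~9.5.1]{Kap2} apply to our quadratic (rather than self-adjoint) perturbation, and determining on which sheet $\mu_1(\epsilon)\epsilon$ lands, since this requires the next-order term in \eqref{e:bp} and is exactly what distinguishes a true spectral instability from a mere resonance.
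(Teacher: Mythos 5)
Your proposal takes essentially the same route as the paper: the paper's ``proof'' of Lemma~\ref{lem:bp} is precisely the edge-bifurcation discussion preceding its statement, which uses the same coordinate $\mu$ with $\lambda=\rmi\sqrt{\sigma^2-\mu^2}$, the same Evans function dichotomy ($E(0)=0$, $E'(0)\neq0$ in the reflectionless case versus $E(0)\neq0$ otherwise), the expansion \eqref{e:bp} from \cite[Theorem~9.5.1]{Kap2}, the realness of $\langle\gamma(x)\Psi,\Psi\rangle$, and the square-root expansion placing the real part at $\calO(\epsilon^3)$, together with the observation that eigenvalues can emerge from the essential spectrum only at the branch point. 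Your explicit invocation of Lemma~\ref{prop:20} for large eigenvalues and your flagging of the sheet-selection question as the genuinely delicate point are consistent with, and if anything slightly more careful than, the paper's treatment.
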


\begin{figure}[tbp]
\includegraphics{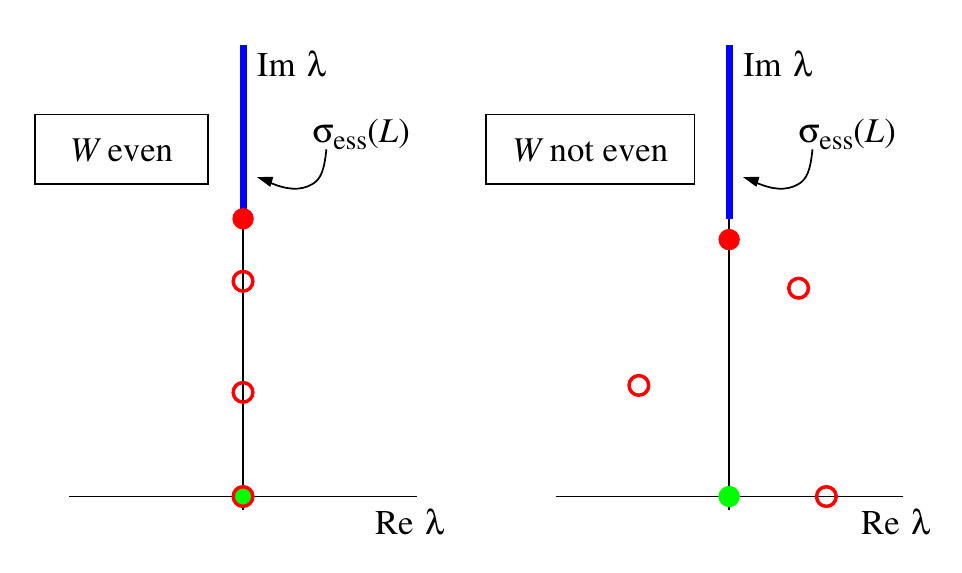}
\caption{A cartoon which illustrates the possible spectrum for small
$\epsilon$ when $W(x)$ is even (left panel) and $W(x)$ is not even (right panel). As a
consequence of the spectral symmetry with respect to the real axis, only the upper-half
of the complex plane is shown. The essential spectrum is given by a thick (blue) line, and
the eigenvalue associated with a potential edge bifurcation is given by a filled (red)
circle. The other nonzero eigenvalues are denoted by open circles. The double eigenvalue at
the origin when $\epsilon=0$ is denoted by a filled circle. When $W(x)$ is even the eigenvalues
remain purely imaginary, and the eigenvalue at the origin remains a double eigenvalue,
whereas if $W(x)$ the eigenvalues will generically have nonzero
real part, and the eigenvalue at the origin becomes simple. The potential simple eigenvalue
arising from the edge bifurcation remains close to the edge of the essential spectrum.}
\label{f:spectralcartoon}
\end{figure}

We are now in position to give a full description of the spectrum for small
$\epsilon>0$. The spectrum for $\epsilon=0$ is fully understood: it is purely
imaginary, has a finite number of nonzero simple eigenvalues in the gap
$(-\rmi\sigma,+\rmi\sigma)$, and has a geometrically simple eigenvalue at the
origin with algebraic multiplicity two. For $\epsilon>0$ each of the nonzero
eigenvalues will move a distance of $\calO(\epsilon)$. Indeed, upon writing
\[
\calL=\left(\begin{array}{rc}0&1\\-\calH&0\end{array}\right)+
\epsilon\left(\begin{array}{rc}0&0\\0&-\gamma(x)\end{array}\right),
\]
and using the fact an eigenfunction associated to the eigenvalue
$\rmi\sigma_j$ is $(\psi_j\,\,\rmi\psi_j)^\rmT$, we see that after using
standard perturbation theory (see \cite[Chapter~6]{Kap2}) each of these
eigenvalues will have the expansion
\[
\lambda_j=\rmi\sigma_j-
\epsilon\frac{\langle\gamma(x)\psi_j,\psi_j\rangle}{2\langle\psi_j,\psi_j\rangle}+
\calO(\epsilon^2).
\]
In particular, if $\langle\gamma(x)\psi_j,\psi_j\rangle\neq0$, then the
eigenvalue will move off the imaginary axis and have nonzero real part. As
for the double eigenvalue at the origin, if
$\langle\gamma(x)\psi_0,\psi_0\rangle\neq0$ one of the two eigenvalues will
leave. The nonzero eigenvalue will be of $\calO(\epsilon)$, and will have the
expansion (again see \cite[Chapter~6]{Kap2})
\[
\lambda_0=-\epsilon\frac{\langle\gamma(x)\psi_0,\psi_0\rangle}{\langle\psi_j,\psi_j\rangle}+
\calO(\epsilon^2).
\]
To leading order the nonzero eigenvalue will be purely real. The spectral
symmetry with respect to the real axis guarantees that the nonzero eigenvalue
is indeed real.

We are now ready to state our main result. The key control of the spectral
structure is whether or not the potential $W(x)$ is an even function. The
cartoon in Fig. \ref{f:spectralcartoon} gives an illustration of the result.

\begin{theorem}\label{thm:31}
Consider the quadratic spectral problem
\[
\lambda^2v+\epsilon\la \gamma(x)v+\calH v=0,
\]
where $\calH$ is the Schr\"odinger operator
\[
\calH=-\partial_x^2-W(x)+\sigma^2.
\]
Assume that $W(x),\gamma(x)$ are exponentially localized, that $\gamma(x)$ is
an odd function, and that $\epsilon>0$ is sufficiently small. Further suppose
that $\calH$ satisfies the spectral Assumption \ref{ass:31}. If $W(x)$ is
even, then the spectrum is purely imaginary, and the eigenvalue at the origin
will have algebraic multiplicity two. If $W(x)$ is not even, the origin will
be a simple eigenvalue, and there will be a purely real eigenvalue which has
the expansion
\begin{equation}\label{eigv2}
\lambda_0=-\epsilon\frac{\langle\gamma(x)\psi_0,\psi_0\rangle}{\langle\psi_0,\psi_0\rangle}+
\calO(\epsilon^2).
\end{equation}
Moreover, in the upper-half plane there will be $n$ eigenvalues, each of
which has the expansion
\begin{equation}\label{eigv3}
\lambda_j=\rmi\sigma_j-
\epsilon\frac{\langle\gamma(x)\psi_j,\psi_j\rangle}{2\langle\psi_j,\psi_j\rangle}+
\calO(\epsilon^2).
\end{equation}
In particular, the real part will be $\calO(\epsilon)$. Finally, there may be
a simple eigenvalue near the branch point, but if so its real part will be
$\calO(\epsilon^3)$.
\end{theorem}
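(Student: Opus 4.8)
The plan is to assemble the theorem from the structural lemmas already in hand, handling the even and non-even cases separately against a common backbone. For $\epsilon=0$ the spectrum of $\calL$ is completely understood: it is purely imaginary, with $n$ simple eigenvalues $\rmi\sigma_j$ in the gap carrying eigenfunctions $(\psi_j\,\,\rmi\psi_j)^\rmT$, together with a geometrically simple eigenvalue at the origin of algebraic multiplicity two. By Lemma \ref{prop:20} no eigenvalue with nonzero real part can run off to infinity for small $\epsilon$, and by Lemma \ref{lem:bp} a new eigenvalue can detach from the essential spectrum only at the branch point $\rmi\sigma$. Hence for $\epsilon>0$ small every point eigenvalue is either a perturbation of one of the $\epsilon=0$ eigenvalues or the single edge-bifurcation eigenvalue of Lemma \ref{lem:bp}; standard analytic perturbation theory (as already carried out in the discussion preceding the theorem) then produces the expansions \eqref{eigv2}--\eqref{eigv3}, with each $\rmi\sigma_j$ spawning a unique nearby simple eigenvalue.

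For the even case I would first exploit parity. When $W$ is even the eigenfunctions of $\calH$ have definite parity, so each $\psi_j^2$ is even; since $\gamma$ is odd, $\gamma\psi_j^2$ is odd and therefore
\[
\langle\gamma(x)\psi_j,\psi_j\rangle=0,\qquad j=0,1,\dots,n.
\]
In particular $\langle\gamma(x)\psi_0,\psi_0\rangle=0$, so Lemma \ref{le:5} gives algebraic multiplicity two at the origin. For the gap eigenvalues this vanishing kills the $\calO(\epsilon)$ term in \eqref{eigv3}, so each perturbed eigenvalue is $\rmi\sigma_j+\calO(\epsilon^2)$ and stays in the gap. To upgrade this to \emph{exact} imaginarity I would invoke the Hamiltonian symmetry of Lemma \ref{le:31}: if $\lambda$ is the eigenvalue near $\rmi\sigma_j$, then $-\overline{\lambda}$ is also an eigenvalue, and it too lies in the upper half plane near $\rmi\sigma_j$; simplicity forces $\lambda=-\overline{\lambda}$, i.e. $\operatorname{Re}\lambda=0$. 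The same symmetry-plus-simplicity argument applied at the branch point pins any edge-bifurcation eigenvalue to the imaginary axis, and Lemma \ref{prop:20} excludes large stray eigenvalues, so the entire spectrum is purely imaginary.

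For the non-even case the parity that forced $\langle\gamma(x)\psi_0,\psi_0\rangle=0$ is lost, so generically $\langle\gamma(x)\psi_0,\psi_0\rangle\neq0$ and Lemma \ref{le:5} makes the origin simple while ejecting a nonzero eigenvalue $\lambda_0$ with the leading-order expansion \eqref{eigv2}. Its leading term is real, and the real-axis symmetry of Lemma \ref{le:31}---the only symmetry available here---forces this isolated simple eigenvalue to be \emph{exactly} real. The $n$ gap eigenvalues obey \eqref{eigv3}; now $\langle\gamma(x)\psi_j,\psi_j\rangle$ need not vanish, so they generically acquire $\calO(\epsilon)$ real part. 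The edge-bifurcation eigenvalue, if present, is handled verbatim by Lemma \ref{lem:bp}, yielding real part $\calO(\epsilon^3)$.

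The main obstacle is the exactness claim in the even case: perturbation theory alone shows only that the gap eigenvalues are imaginary to leading (indeed $\calO(\epsilon)$) order, and one must rule out an $\calO(\epsilon^2)$ real part. The clean way around this is \emph{not} to push the expansion to higher order but to use the Hamiltonian symmetry as a rigidity constraint: reflection across the imaginary axis sends the unique eigenvalue near $\rmi\sigma_j$ to itself, so simplicity places it on the axis exactly. The remaining care is bookkeeping---checking that Lemmas \ref{prop:20} and \ref{lem:bp} jointly account for \emph{all} of the point spectrum, so that the symmetry argument, applied eigenvalue by eigenvalue, genuinely covers the whole spectrum.
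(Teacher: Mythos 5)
Your proposal is correct and follows essentially the same route as the paper: Hamiltonian-symmetry rigidity (Lemma \ref{le:31}) plus simplicity pins the gap and edge eigenvalues to the imaginary axis in the even case, the non-even case is read off from the perturbation expansions, and Lemmas \ref{prop:20} and \ref{lem:bp} account for eigenvalues coming from infinity or from the essential-spectrum edge, exactly as the paper intends. The only minor variation is at the origin in the even case: you obtain algebraic multiplicity two from the parity identity $\langle\gamma(x)\psi_0,\psi_0\rangle=0$ combined with Lemma \ref{le:5}, whereas the paper argues that an ejected eigenvalue would have to be real while the Hamiltonian symmetry forces real eigenvalues to come in pairs, contradicting preservation of multiplicity; both arguments are valid.
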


\begin{proof}
Suppose that $W(x)$ is even. For any $\epsilon\ge0$ the spectrum satisfies
the Hamiltonian symmetry $\{\pm\lambda,\pm\overline{\lambda}\}$. This
symmetry first implies that if an eigenvalue is simple and purely imaginary
for $\epsilon=0$, then it must remain so for $\epsilon>0$ sufficiently small.
Indeed, it cannot leave the imaginary axis until it collides with another
purely imaginary eigenvalue. If $W(x)$ is reflectionless, then it is possible
for one eigenvalue to emerge from the essential spectrum through the edge
bifurcation. The Hamiltonian spectral symmetry forces this eigenvalue to be
purely imaginary for small $\epsilon$. Finally, the eigenvalue at the origin
must remain a double eigenvalue. One eigenvalue must always remain at the
origin due to $\calH\psi_0=0$, and we know the other is purely real. But, the
Hamiltonian symmetry implies that real eigenvalues come in pairs (symmetric
about the imaginary axis). Since this scenario is precluded, the eigenvalue
cannot leave.

On the other hand, if $W(x)$ is not even, then we have the perturbation
expansions as proven before the statement of the theorem.
\end{proof}

\begin{remark}
If $W(x)$ is even, then the eigenfunctions of the Schr\"odinger operator
$\calH$ are either even or odd. The assumption that the smallest eigenvalue
of $\calH$ is zero implies via Sturm-Liouville theory that the associated
eigenfunction is even and nonzero. An eigenfunction associated with the
eigenvalue $\sigma_j^2$ will be odd if $j$ is odd, and even if $j$ is even;
moreover, it will have $j$ zeros. In any event, since $\gamma(x)$ is odd it
will be the case that the first-order terms in the perturbation expansions
will be zero; hence, if the nonzero eigenvalues move, they will do so at a
rate of $\calO(\epsilon^2)$.
\end{remark}

\section{Numerical Results}

We now turn to numerical considerations in order to identify the motion of
the relevant eigenvalues and corroborate the predictions of Theorem
\ref{thm:31}. We first need to know the spectrum associated with the
Schr\"odinger operator $\calH$ for the sine-Gordon model and the $\phi^4$
model. We will henceforth assume that
\[
\gamma(x)=-x\rme^{-x^2/2},
\]
so that $\gamma(x)>0$ for $x<0$ (lossy side), and $\gamma(x)<0$ for $x>0$
(gain side).

First consider the sine-Gordon model. Since
\[
f'(\phi)=\sin(\phi),\quad\phi(x)=4\arctan(\rme^x),
\]
we can write $\calH$ as
\[
\calH=-\partial_x^2+\cos(\phi)=-\partial_x^2+(1-\cos(\phi))+1.
\]
The branch point in the upper-half plane is $\lambda_{\mathrm{br}} = \rmi$.
Since there is a bounded eigenfunction for $\calH$ at the branch point
$\sigma^2=1$,
\[
\calH(\Psi)=\Psi,\quad\Psi(x)=\tanh(x),
\]
an Evans function analysis yields that the potential $W(x)=\cos(\phi)-1$ is
reflectionless. Thus, it is possible for an eigenfunction to emerge from the
branch point $\lambda_{\mathrm{br}}$. Other than the zero eigenvalue, there
are no nonnegative eigenvalues. According to Theorem \ref{thm:31} the absence
of point spectrum eigenvalues other than the double one at the origin
suggests that to leading order this is the only eigenvalue that one needs to
worry about under the effect of the $\mathcal{P T}$ symmetric term. The
eigenvalue which arises from the origin will be real and of
$\calO(\epsilon)$, whereas the eigenvalue which potentially arises from the
edge bifurcation will have a real part of $\calO(\epsilon^3)$.

Now consider the $\phi^4$ model. Since
\[
f'(\phi)=-2+6\phi^2,\quad\phi(x)=\tanh(x),
\]
we can write $\calH$ as
\[
\calH=-\partial_x^2-2+6\phi^2=-\partial_x^2+6\sech^2(x)+4.
\]
Thus, $\sigma^2=4$, so the branch point in the upper-half plane is
$\lambda_{\mathrm{br}}=\rmi2$. Regarding the point spectrum, the problem with
$W(x)=-6\sech^2(x)$ was considered in \cite[Chapter~9.3.2]{Kap2}; see
also~\cite{sugiy}. Therein it was shown that this is a reflectionless
potential, which implies that it is possible for an eigenvalue to emerge from
the branch point $\lambda_{\mathrm{br}}$. Moreover, in addition to the zero
eigenvalue there is one positive eigenvalue given by $\sigma^2=3$. According
to Theorem \ref{thm:31} there are two eigenvalues of concern: the one near
the origin, and the one near $\rmi\sqrt{3}$. Under the effect of the
$\mathcal{P T}$ symmetric term there will again be a purely real eigenvalue
of $\calO(\epsilon)$, and unlike the sine-Gordon model there will also be an
eigenvalue near $\rmi\sqrt{3}$ which has a real part of $\calO(\epsilon)$. If
an edge bifurcation occurs, the bifurcating eigenvalue will have a real part
of $\calO(\epsilon^3)$.

\begin{figure}[tbp]
\includegraphics[width=8cm]{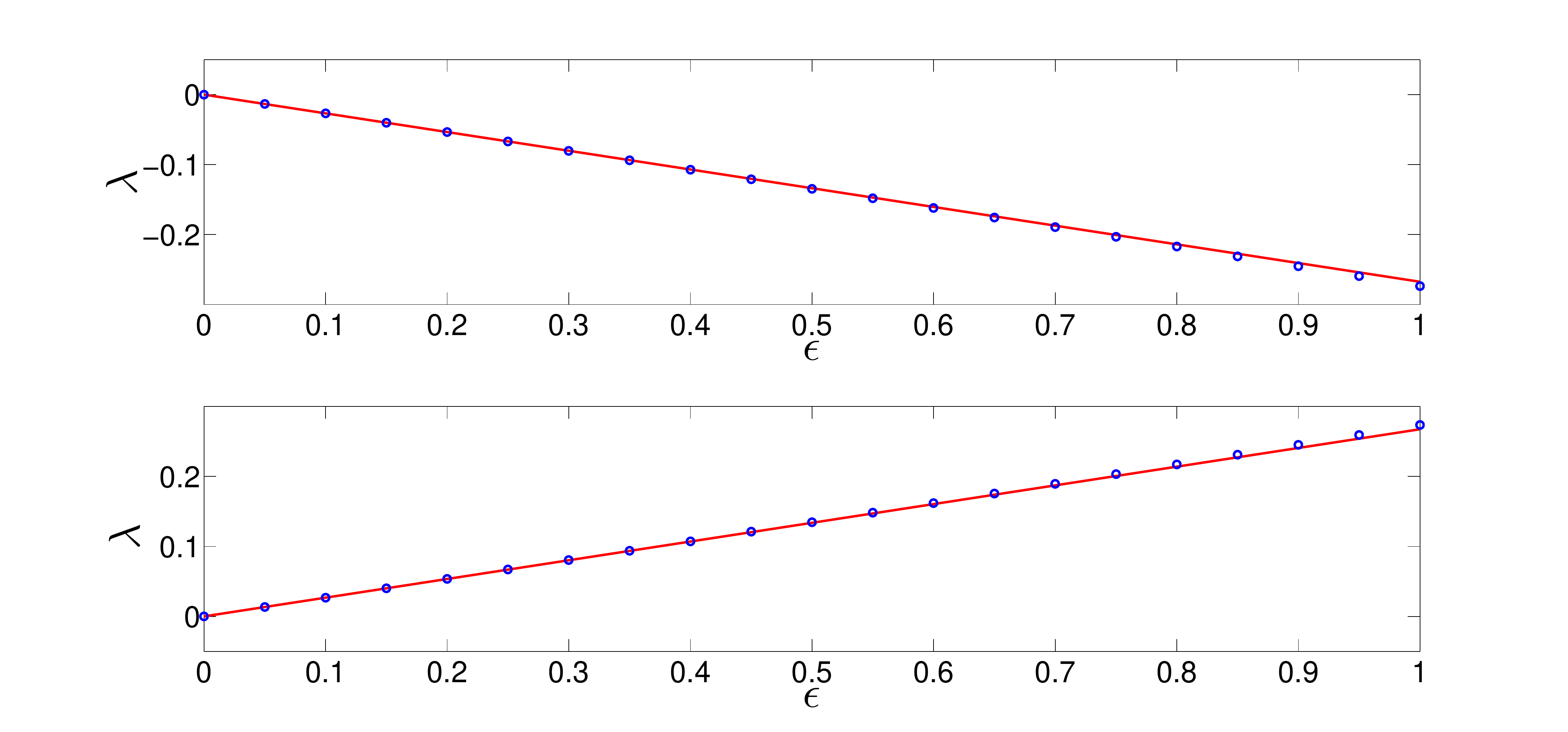}
\caption{The shifting of the formerly neutral eigenvalue
in the presence of a $\mathcal{P T}$ symmetric perturbation
for the sine-Gordon model, as described in the text for
a kink centered on the lossy side ($x_0=-2$, top) and
one centered on the gain side ($x_0=2$, bottom). The dots
denote numerical results for different values of $\epsilon$
(prefactor of the $\mathcal{P T}$ symmetric term), while the
solid line yields the analytical prediction of Eq.~(\ref{eigv2})
[see also the text]. Only for $\epsilon \approx 1$, does the
numerical result start to weakly deviate from the explicit theoretical
approximation.}
\label{fig1}
\end{figure}

The important point to note here is that each equation is invariant under
spatial translation, so that a translate of a solution is also a solution. If
we consider a solution which has been translated, then it the case that the
eigenfunctions will be translated by the same amount. Translating the wave,
i.e., moving its center, will act as a way to break symmetry. Moving the
center will force the initially purely imaginary eigenvalues associated with
the linearization about the centered wave (the wave balanced between the gain
side and the lossy side) to gain a nontrivial real part.

Let us first consider the possible bifurcation from the origin for each
model. For the sine-Gordon model we have
\[
\phi(x-x_0)=4\arctan(\rme^{x-x_0})\quad\Rightarrow\quad
\psi_0(x)=\sech(x-x_0).
\]
The associated potential $W(x)$ is even for $x_0=0$; otherwise, it is not. We
know that the spectrum will be purely imaginary if $x_0=0$: the question is
what happens for $x_0\neq0$. A direct numerical calculation of the integrals
associated with the perturbation expansion shows that for the eigenvalue
bifurcating from the origin,
\[
\lambda\sim\begin{cases}\,\,\,\,0.2675\epsilon,\quad&x_0=2\\
-0.2675\epsilon,\quad&x_0=-2
\end{cases}
\]
(see Fig. \ref{fig1}). In other words, if the underlying wave is centered on
the gain side, then there will be a real positive eigenvalue and the wave
will be unstable, whereas if the wave is centered on the loss side the
bifurcating real eigenvalue will be negative. If it were the case that the
potential eigenvalue emerging from the branch point $\lambda=\rmi$ also has
nonpositive real part, then the wave would be spectrally stable. While we
will not show the details here, a preliminary calculation suggests that if
there is an eigenvalue which arises from an edge bifurcation, then the real
part must be $\calO(\epsilon^3)$.

\begin{figure}[tbp]
\includegraphics[width=8cm]{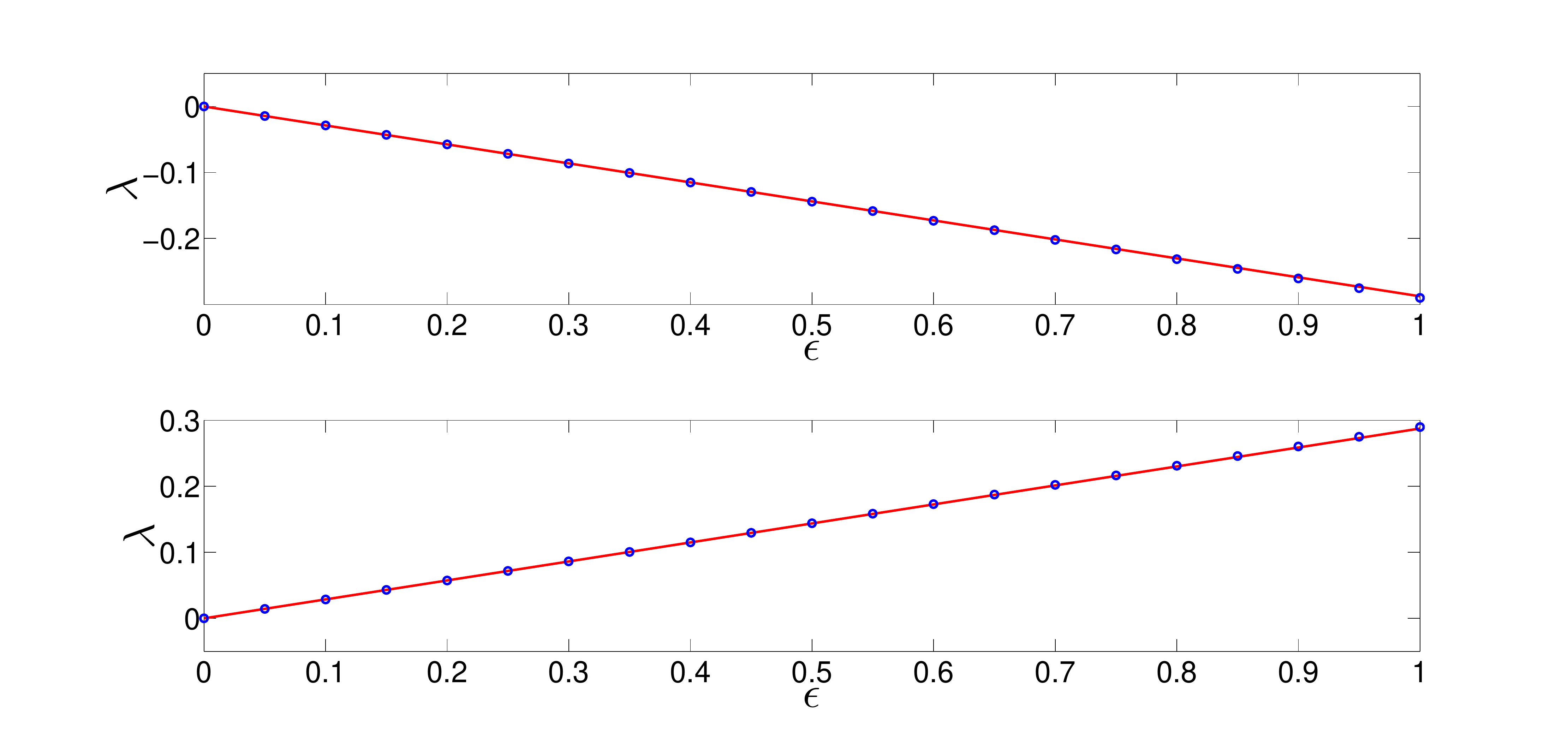}
\caption{The same as the previous figure but now for
the $\phi^4$ model.}
\label{fig2}
\end{figure}

For the $\phi^4$ model we have
\[
\phi(x-x_0)=\tanh(x-x_0)\quad\Rightarrow\quad
\psi_0(x)=\sech^2(x-x_0).
\]
The associated potential $W(x)$ is even for $x_0=0$; otherwise, it is not. We
know that the spectrum will be purely imaginary if $x_0=0$: the question is
what happens for $x_0\neq0$. A direct numerical calculation of the integrals
associated with the perturbation expansion shows that for the eigenvalue
bifurcating from the origin,
\[
\lambda\sim\begin{cases}\,\,\,\,\,0.2876\epsilon,\quad&x_0=+2\\
-0.2876\epsilon,\quad&x_0=-2
\end{cases}
\]
(see Fig. \ref{fig2}). Again, if the underlying wave is centered on the gain
side, then there will be a real positive eigenvalue and the wave will be
unstable, whereas if the wave is centered on the loss side the bifurcating
real eigenvalue will be negative. There is the additional eigenvalue which
starts at $\lambda=\rmi\sqrt{3}$. Since
\[
\psi_1(x)=\tanh(x)\sech(x)
\]
(see \cite{sugiy}), a numerical calculation of the integrals gives
\[
\lambda\sim\rmi\sqrt{3}+\begin{cases}\,\,\,\,\,0.1137\epsilon,\quad&x_0=+2\\
-0.1137\epsilon,\quad&x_0=-2\end{cases}
\]
(see Fig. \ref{fig3} for the case of $x_0=-2$, and Fig. \ref{fig4} for the
case of $x_0=2$). As is the case with the eigenvalue bifurcating from the
origin, if the wave is centered on the gain side, the real part of the
eigenvalue will be positive, whereas if the wave is centered on the lossy
side, the real part will be negative. If it were the case that the potential
eigenvalue emerging from the branch point $\lambda=\rmi2$ also has
nonpositive real part, then the wave would be spectrally stable. While we
will not show the details here, a preliminary calculation again suggests that
if there is an eigenvalue which arises from an edge bifurcation, then the
real part must be $\calO(\epsilon^3)$.

\begin{figure}[tbp]
\includegraphics[width=8cm]{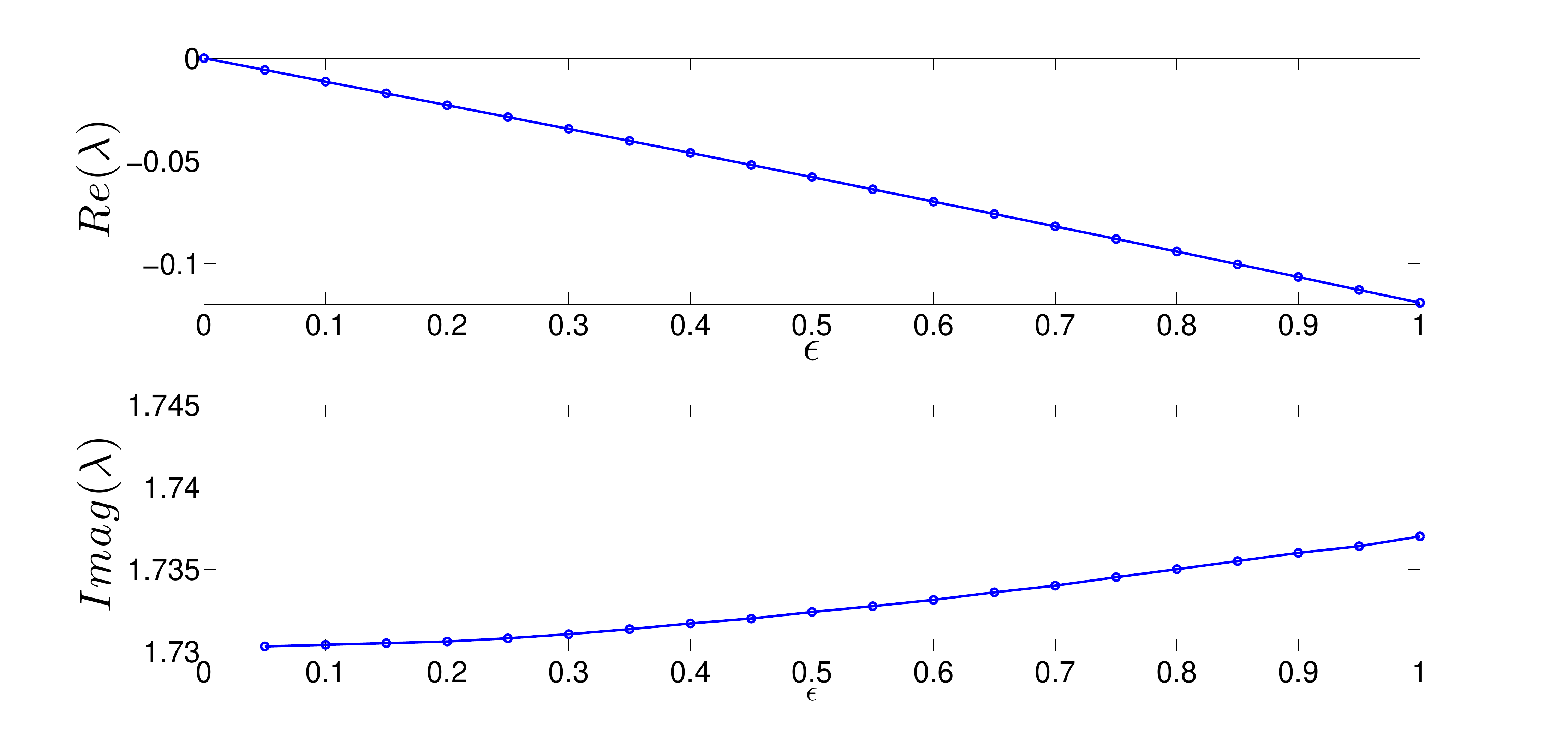}
\caption{The figure shows the dependence of the real (top) and imaginary
(bottom) part of the eigenvalue originally (i.e., for $\epsilon=0$)
located at $\lambda^2=-3$ for the $\phi^4$ model. For the kink centered
at $x_0=-2$, the perturbation knocks this pair off of the imaginary axis
and into a complex pair in the left half plane, implying spectral
stability.}
\label{fig3}
\end{figure}

\begin{figure}[tbp]
\includegraphics[width=8cm]{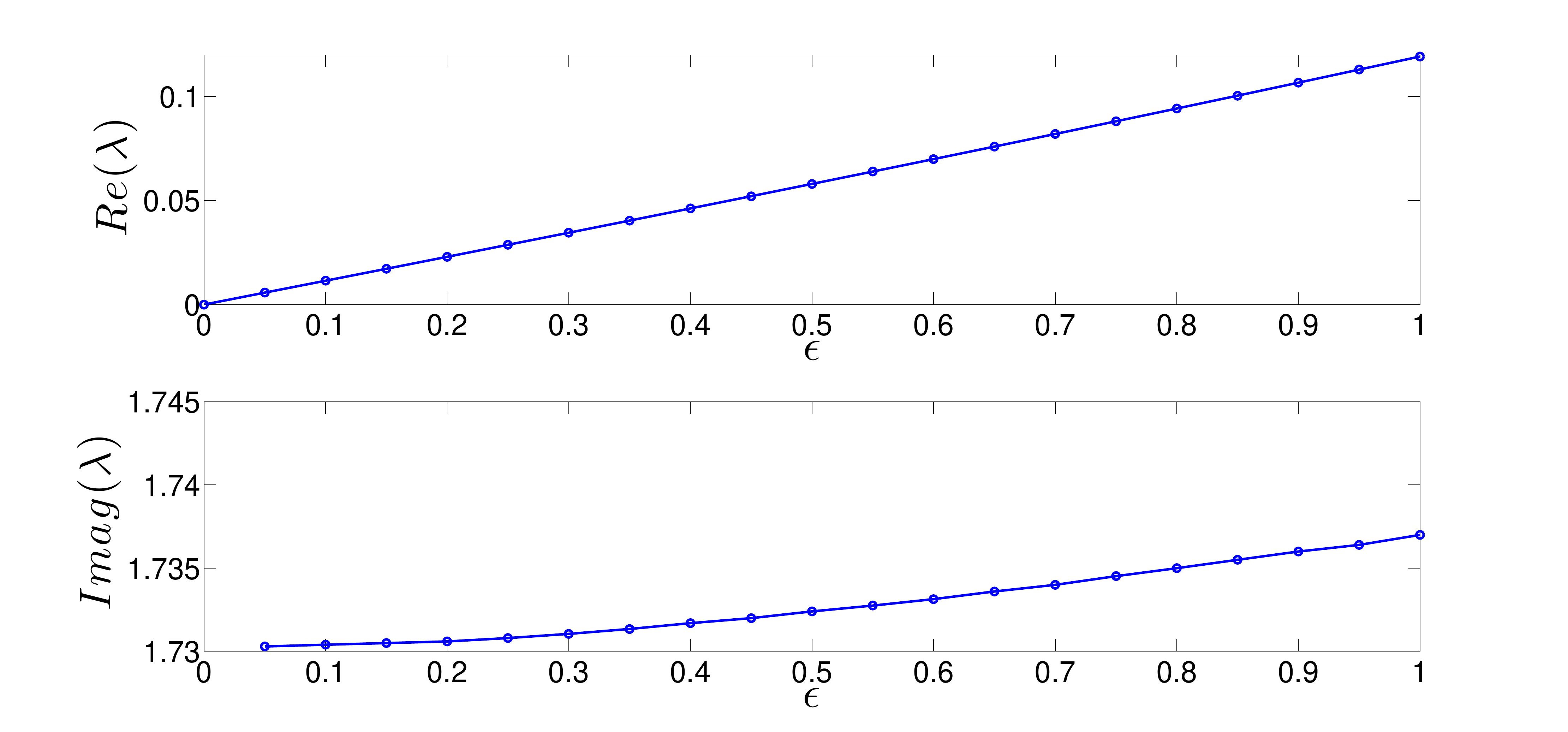}
\caption{The same as the previous figure but now for
the case of $x_0=2$, leading to (additional)
instability due to the bifurcation
of the relevant eigenvalue to the right half plane.}
\label{fig4}
\end{figure}

\section{Conclusions and Future Challenges}

In this work, we have introduced a $\mathcal{P T}$-symmetric variant of the
widely relevant (in mechanical, electrical and other physical systems --as
e.g. in particle physics--) Klein-Gordon field theory. We have done so by
incorporating an anti-symmetric dashpot term featuring gain on one side of
the domain and loss symmetrically on the other side. While such $\mathcal{P
T}$-symmetric Klein-Gordon settings are starting to be analyzed even
experimentally at the discrete level (especially in the case of few
nodes), the case of the continuum limit is virtually unexplored. We have seen
that for such systems, the stationary kink states of the model persist;
however, their spectral stability is dramatically affected. If the kinks of
interest are centered at the exact interface between the gain and loss, then
they preserve their spectral stability, a feature that our Theorem
\ref{thm:31} allows us to establish rigorously. If they are centered on the
lossy side, they become further stabilized with parts of the point spectrum
moving to the left half plane (either on the real axis or as complex
eigenvalue pairs). In this case, if there is a spectral instability, it is
weak, for the real part of the unstable eigenvalue will be
$\calO(\epsilon^3)$. On the other hand, by means of the symmetries of the
model, it is evident that when they are centered on the gain side of the
medium, then the relevant eigenvalues bifurcate to the right half plane,
giving rise to instability.

While the above lay some of the foundations of the subject of $\mathcal{P
T}$-symmetric Klein-Gordon field theories, there are numerous questions that
remain open both as regards the existence and in connection to the stability
properties of coherent structures in such models. A feature that we did not
touch upon, which our preliminary observations suggest that it is even
numerically quite technically challenging is the fate of the resonance in the
case of $x_0 \neq 0$. Understanding whether this edge bifurcation from the
continuous spectrum may contribute to an instability in such a setting would
be a topic of theoretical and numerical interest in its own right. In
addition, it would be interesting to develop an instability index theory for
such quadratic polynomials in which the linear coefficient is not definite.
As was seen in \cite{Kap3} in the case of quadratic matrix polynomials, if
the linear coefficient is definite, then the total number of eigenvalues with
positive real part is bounded above by the total number of negative
eigenvalues for the Schr\"odinger operator $\calH$. Unfortunately, the proof
of this result heavily depended upon the definiteness of the linear term, and
consequently it is not clear as to how it can be extended to the problem
considered in this paper.


As noted earlier there are states which are genuinely time-dependent in
Klein-Gordon models, and these states include the traveling kinks and the
breathers. While one may be inclined to doubt the existence of genuinely
traveling kinks given the spatial inhomogeneity imposed by the gain-loss
profile, it would be interesting to explore the dynamics of such waves.  For
the breathers it is unclear whether the $\mathcal{P T}$-symmetric extension
will preserve their structure (and if so, under which conditions this may
occur). These questions are currently under consideration and will be
reported in future publications.






\begin{thebibliography}{100}



\bibitem{R1}
C. M. BENDER, ``Making Sense of Non-Hermitian Hamiltonians'',
Rep.\ Prog.\ Phys.\ {\bf 70} (2007) 947--1018.

\bibitem{ziad} Z. H. MUSSLIMANI, K. G. MAKRIS, R. EL-GANAINY, and D. N. CHRISTODOULIDES,
Optical Solitons in PT Periodic Potentials, Phys. Rev. Lett. {\bf 100} (2008) 030402 (4 pages).

\bibitem{Ramezani} H. RAMEZANI, T. KOTTOS, R. EL-GANAINY, and D.N.
CHRISTODOULIDES,
``Unidirectional nonlinear PT-symmetric optical structures", Phys. Rev. A {\bf 82} (2010), 043803 (6 pages).

\bibitem{Muga} A. RUSCHHAUPT, F. DELGADO, and J.G. MUGA, ``Physical realization of
PT-symmetric potential scattering in a planar slab waveguide", J. Phys. A: Math. Gen.
{\bf 38} (2005) L171--L176.

\bibitem{salamo} A. GUO, G.J. SALAMO, D. DUCHESNE, R. MORANDOTTI,
M. VOLATIER-RAVAT, V.
AIMEZ, G. A. SIVILOGLOU, and D. N. CHRISTODOULIDES, ``Observation of PT -
symmetry breaking in complex optical potentials'', Phys. Rev. Lett. {\bf 103},
093902 (2009).

\bibitem{dncnat}
C.E. R{\"U}TER, K.G. MAKRIS, R. EL-GANAINY, D.N. CHRISTODOULIDES, M.
SEGEV, and D. KIP,
``Observation of parity-time symmetry in optics'', Nature Physics {\bf 6} (2010) 192--195.

\bibitem{Abdullaev} F.Kh. ABDULLAEV, Y.V. KARTASHOV, V.V. KONOTOP,
and D.A. ZEZYULIN,
``Solitons in PT-symmetric nonlinear lattices", Phys. Rev. A {\bf 83} (2011), 041805(R) (4 pages).

\bibitem{baras2} N. V. ALEXEEVA, I. V. BARASHENKOV, A.A. SUKHORUKOV,
and YU.S. KIVSHAR,
``Optical solitons in PT-symmetric nonlinear couplers with gain and loss'',
Phys. Rev. A {\bf 85} (2012), 063837 (13 pages)

\bibitem{baras1} I.V. BARASHENKOV, S.V. SUCHKOV, A.A. SUKHORUKOV,
S.V. DMITRIEV and YU.S. KIVSHAR, ``Breathers in PT-symmetric optical couplers'',
Phys. Rev. A {\bf 86} (2012) 053809 (12 pages)


\bibitem{malom1} R. DRIBEN and B.A. MALOMED,
``Stability of solitons in parity-time-symmetric couplers'',
Opt. Lett. {\bf 36} (2011), 4323--4325.

\bibitem{malom2} R. DRIBEN and B.A. MALOMED,
``Stabilization of solitons in PT models with supersymmetry by periodic management '',
EPL {\bf 96} (2011), 51001 (6 pages).

\bibitem{Nixon} S. NIXON, L. GE, and J. YANG, ``Stability analysis for solitons in
PT-symmetric optical lattices", Phys. Rev. A {\bf 85} (2012), 023822 (10 pages).

\bibitem{Dmitriev} S.V. DMITRIEV, A.A. SUKHORUKOV, and YU.S.
KIVSHAR, ``Binary
parity-time-symmetric nonlinear lattices with balanced gain and loss", Opt. Lett.
{\bf 35} (2010), 2976--2978.

\bibitem{Pelin1} V.V. KONOTOP, D.E. PELINOVSKY, and D.A. ZEZYULIN,
``Discrete solitons in PT-symmetric lattices", EPL {\bf 100} (2012), 56006 (6 pages).

\bibitem{Sukh}  A.A. SUKHORUKOV, S.V. DMITRIEV, S.V. SUCHKOV, and YU.S.
KIVSHAR,
``Nonlocality in PT-symmetric waveguide arrays with gain and loss''
Opt. Lett. {\bf 37} (2012) 2148-2150.

\bibitem{Li} K. LI and P.G. KEVREKIDIS, ``PT-symmetric oligomers: analytical solutions,
linear stability, and nonlinear dynamics", Phys. Rev. E {\bf 83} (2011), 066608 (7 pages).

\bibitem{Guenter} K. LI, P.G. KEVREKIDIS, B.A. MALOMED, and U. G\"{U}NTHER,
``Nonlinear PT-symmetric plaquette'', J. Phys. A Math. Theor. {\bf 45} (2012) 444021 (23 pages)

\bibitem{suchkov} S.V. SUCHKOV, B.A. MALOMED, S.V. DMITRIEV and
YU.S. KIVSHAR, ``Solitons in a chain of parity-time-invariant dimers'',
Phys. Rev. E {\bf 84} (2011), 046609.

\bibitem{Sukhorukov} A.A. SUKHORUKOV, Z. XU, and  YU.S. KIVSHAR,
``Nonlinear suppression of time reversals in PT-symmetric optical couplers"
Phys. Rev. A {\bf 82} (2010), 043818 (5 pages).

\bibitem{ZK} D.A. ZEZYULIN and V.V. KONOTOP, ``Nonlinear modes in finite-dimensional PT-symmetric systems"
Phys. Rev. Lett. {\bf 108} (2012), 213906 (5 pages).


\bibitem{peyrard} T. DAUXOIS and M. PEYRARD, {\it Physics of Solitons},
    Cambridge University Press, (Cambridge, 2006).

\bibitem{dodd}
R. K. DODD, J. C. EILBECK, J. D. GIBBON, and H.C. MORRIS,
{\it Solitons and Nonlinear Wave Equations}
Academic Press (New York, 1983).

\bibitem{anderson}  J. CUEVAS, L. Q. ENGLISH, P.G. KEVREKIDIS, and M. ANDERSON,
``Discrete Breathers in a Forced-Damped Array of Coupled Pendula: Modeling, Computation, and Experiment'',
Phys. Rev. Lett. {\bf 102}, 224101 (2009).

\bibitem{remoiss} M. REMOISSENET,
{\it Waves Called Solitons},
Springer-Verlag (Berlin, 1999).

\bibitem{bend_mech} C.M. BENDER, B.J. BERNTSON, D. PARKER and
E. SAMUEL,
``Observation of PT-phase transition in a simple mechanical system'',
Am. J. Phys. {\bf 81}, 173 (2013).

\bibitem{R21} J. SCHINDLER, A. LI, M. C. ZHENG, F. M. ELLIS, and T. KOTTOS,
``Experimental study of active LRC circuits with $\mathcal{PT}$ symmetries'',
Phys.\ Rev.\ A {\bf 84}, 040101 (2011).

\bibitem{tsampas2} H. RAMEZANI, J. SCHINDLER, F. M. ELLIS, U. G{\"U}NTHER,
and T. KOTTOS,
``Bypassing the bandwidth theorem with $\mathcal{PT}$ symmetry'',
Phys.\ Rev.\ A {\bf 85}, 062122 (2012).

\bibitem{lazar} N. LAZARIDES and G. P. TSIRONIS,
``Gain-Driven Discrete Breathers in PT-Symmetric Nonlinear Metamaterials''
Phys. Rev. Lett. {\bf 110}, 053901 (2013).

\bibitem{Kap2} T.~KAPITULA and K.~PROMISLOW, ``Spectral and Dynamical
    Stability of Nonlinear Waves", Springer-Verlag (2013).


\bibitem{Kap1} J.~BRONSKI, M.~JOHNSON, and T.~KAPITULA, ``An instability
    index theory for quadratic pencils and applications", Comm. Math. Phys.
    (2014), to appear.

\bibitem{StanStef1}  M. ~  STANISLAVOVA, A. ~ STEFANOV, \emph{Linear stability analysis for traveling waves of second order in time PDE's},  {\em Nonlinearity } {\bf 25}, (2012) p. 2625--2654.

\bibitem {StanStef2}  M. ~  STANISLAVOVA, A. ~ STEFANOV,   \emph{Spectral stability analysis for special solutions of second order in time PDE's: the higher dimensional case},  {\em Physica D}, {\bf 262} (2013), p. 1--13.





\bibitem{Pel1} M.~CHUGUNOVA and D.~PELINOVSKY, ``On quadratic eigenvalue
    problems arising in stability of discrete vortices'', Lin. Alg. Appl. \textbf{431}
    (2009) 962--973.

\bibitem{Agmon} S. ~ AGMON, ``Spectral properties of Schr\"odinger operators and scattering theory'',
Ann. Scuola Norm. Sup. Pisa Cl. Sci. (4)  {\bf 2}  (1975), no. 2, 151–218.


\bibitem{KatoJensen} A. ~ JENSEN, T. ~ KATO, ``Spectral properties of Schrödinger operators and time-decay of the wave functions'' Duke Math. J.  {\bf 46} (1979), no. 3, 583--611.

\bibitem{DeiftTrub} P. ~DEIFT, E. ~  TRUBOWITZ, ``Inverse scattering on the line''
Comm. Pure Appl. Math.  {\bf 32} (1979), no. 2, 121--251.


\bibitem{sugiy} T. SUGIYAMA,
``Kink-Antikink collisions in the two-dimensional $\phi^4$ model'',
Prog. Theor. Phys. {\bf 61}, 1550 (1979).

\bibitem{Kap3} T.~KAPITULA, E. HIBMA, H.-P. KIM, and J. TIMKOVICH,
    ``Instability indices for matrix polynomials",
    Linear Algebra Appl. \textbf{439}(11), 3412-3434 (2013).
    (2013)

\end{thebibliography}
\end{document}